\newcommand{\N}{{\mathbb N}}
\newcommand{\Z}{{\mathbb Z}}
\newcommand{\F}{\mathbb F}
\renewcommand{\b}[1]{\mathbf{#1}}
\newcommand{\T}{{\mathcal T}}
\renewcommand{\S}{{\mathcal S}}
\newcommand{\M}{{\mathcal M}}
\renewcommand{\H}{{\mathcal H}}
\newcommand{\Q}{{\mathcal Q}}
\newcommand{\D}{{\mathcal D}}
\newcommand{\lcm}{\mathrm{lcm}}
\newcommand{\tq}{\; : \;}
\newcommand{\doble}[2]{\genfrac{}{}{0cm}{2}{#1}{#2}} 
\newcommand{\pega}[1]{\hspace{-.#1 cm}}
\newtheorem{theorem}{Theorem}
\newtheorem{lemma}[theorem]{Lemma}
\newtheorem{proposition}[theorem]{Proposition}
\newtheorem{definition}[theorem]{Definition}
\newtheorem{corollary}[theorem]{Corollary}
\newtheorem{remark}[theorem]{Remark}
\newtheorem{notation}[theorem]{Notation}
\newtheorem{example}[theorem]{Example}
\newtheorem{examples}[theorem]{Examples}
\begin{document}

\title{Apparent distance and a notion of BCH multivariate codes
\thanks{This work was partially supported by MINECO (Ministerio de Econom\'{\i}a
y Competitividad), (Fondo Europeo de Desarrollo Regional)
project MTM2012-35240, Programa Hispano Brasile\~{n}o de Cooperaci\'{o}n Universitaria PHB2012-0135, and Fundaci\'{o}n S\'{e}neca of Murcia. The second author has been supported by Departamento Administrativo de Ciencia, Tecnolog\'{\i}a e Innovaci\'on de la Rep\'ublica de Colombia}}



%
\author{\IEEEauthorblockN{Jos\'e Joaqu\'{i}n Bernal\IEEEauthorrefmark{1},
Diana H. Bueno-Carre\~no\IEEEauthorrefmark{2} and
Juan Jacobo Sim\'on\IEEEauthorrefmark{1}. \\
\IEEEauthorblockA{\IEEEauthorrefmark{1}Departamento de Matem\'{a}ticas\\
Universidad de Murcia,
30100 Murcia, Spain.\\ Email: \{josejoaquin.bernal, jsimon\}@um.es} \\
\IEEEauthorblockA{\IEEEauthorrefmark{2}Departamento de Ciencias Naturales y Matem\'{a}ticas\\
Pontificia Universidad Javeriana, 
 Cali, Colombia\\
 Email: dhbueno@javerianacali.edu.co}
}}

\maketitle

\begin{abstract}
This paper is devoted to study two main problems: on the one hand, to compute the apparent distance of an abelian code and on the other hand, to give a notion of BCH multivariate code. To do this, we present an algorithm to compute the apparent distance of an abelian code, based on some manipulations of hypermatrices associated to its generating idempotent.  Our method uses less computations than those given in \cite{Camion} and \cite{Evans}; furthermore, in the bivariate case, the order of the computations is reduced from exponential to linear.  Then we use our techniques to develop a notion of BCH code in the multivariate case and we extend most of the classical results on BCH codes. Finally, we apply our method in two directions: we construct abelian codes from cyclic codes,  multiplying their dimension and preserving their apparent distance; and we design abelian codes with maximum dimension with respect to a fixed apparent distance and a fixed length.
\end{abstract}

%
\IEEEpeerreviewmaketitle

\section{Introduction}

The oldest lower bound for the minimum distance of a cyclic code is the BCH bound (see \cite[p. 151]{HP}). Its study and its generalizations are classical topics, which include the study of the very well-known family of BCH codes. In 1970, P. Camion \cite{Camion} extended the notion of the BCH bound to the family of abelian codes by introducing the apparent distance of an abelian code. In the case of cyclic codes, the apparent distance and the lower BCH bound coincide (see paragraph below Definition~\ref{apparentdistance}).

The computation of the apparent distance of an abelian code $C$ in a semisimple ring is based, in turn, in the computation of the apparent distance of some polynomials, which correspond to all sets of idempotents belonging to $C$. This implies that an enormous number of computations are involved. Then, it is of interest to simplify it. In \cite{Evans}, Sabin computes the apparent distance of a single polynomial by using matrix manipulations, in the frame of 2-D cyclic codes (abelian codes in two variables). Even the Sabin's matrix method simplifies the original one, the number of computations was not modified. So, the problem of reducing the exponential complexity is still open.

In the Camion's mentioned paper, one may see that the apparent distance of a cyclic code equals the apparent distance of a polynomial associated to the generating idempotent. There are examples that shows that in the multivariate case the equality does not hold. Then, we wonder if we can obtain the apparent distance of an abelian code by using uniquely manipulations of the hypermatrix associated to its generating idempotent.

This is the first goal of this paper. We present an algorithm to compute the apparent distance of an abelian code, based on certain manipulations of the hypermatrix (that extends the Sabin's matrix methods) associated to the generating idempotent of a given code. Our method uses less computations than the others; in fact, in the bivariate case it has \textit{linear} complexity, in certain sense, instead of the exponential complexity of the original computation (see Remark~\ref{complejidad}); moreover, we use our techniques to develop a notion of BCH code in the multivariate case and we extend most of the classical results in BCH codes. Finally, we apply our techniques in two directions. The first one consists of constructing abelian codes from cyclic codes,  multiplying their dimension and preserving their apparent distance. The second one consists of the design of abelian codes with maximum dimension with respect to a fixed apparent distance and length.

 In Section III, we present some technical results that we use to compute the minimum apparent distance of a code by using a subset of hypermatrices. Here, we give the notion of minimum apparent distance of a hypermatrix and we show that the apparent distance of an abelian code equals the minimum apparent distance of some hypermatrices associated to its generating idempotent. Section IV is devoted to develop  an algorithm to computing the minimum apparent distance. Even most of those results are technical, they enclose techniques that will be used to give a notion of BCH multivariate code. This is done in Section V, where we also study the extension of some classical results about this BCH codes. In Section VI we give some applications.

\section{Notation and preliminaries}\label{preliminares}

In this section, we shall introduce all notation and terminology needed to understand our results. We also recall some basic facts and definitions.

All throughout, $\F_q$ denotes the field with $q$ elements where $q$ is a power of a prime $p$.
An abelian code is an ideal of a group algebra $\F_q G$, where $G$ is an abelian group. It is well-known that a decomposition $G\simeq C_{r_1}\times\cdots\times C_{r_s}$, with $C_{r_k}$ the cyclic group of order $r_k$ for $k=1,\dots,s$, induces a canonical isomorphism of $\F_q$-algebras from $\F_q G$ to 
   $$\F_q[x_1,\dots,x_s]/\left\langle x_1^{r_1}-1,\dots,x_s^{r_s}-1\right\rangle.$$ 
We denote this quotient algebra by $A_q(r_1,\dots,r_s)$. We identify the codewords with polynomials $f(x_1,\dots,x_s)$ such that every monomial satisfies that the degree of the indeterminate $x_k$ belongs to $\Z_{r_k}$, the ring of integers modulo $r_k$, that we always write as canonical representatives (that is, non negative integers less than $r_k$). We denote by $I$ the set $\Z_{r_1}\times\cdots\times \Z_{r_s}$ and we  write the elements $f \in  A_q(r_1,\dots,r_s)$ as $f=f(x_1,\dots,x_s)=\sum a_\b{i} X^\b{i}$, where $\b{i}=(i_1,\dots, i_s)\in I$ and $X^\b{i}=x_1^{i_1}\cdots x_s^{i_s}$. In case we first consider a polynomial $f \in \F_q[x_1,\dots,x_s]$, possibly having a monomial whose degree in the $k$-th variable is greater than or equal to $r_k$, then we denote by $\overline{f}$ its image under the canonical projection onto $A_q(r_1,\dots,r_s)$. We deal with abelian codes in the semisimple case; that is, we always assume that $\gcd(r_k,q)=1$ for every $k=1,\dots,s$.
  
We denote by $U_{r_i}$ the set of all $r_i$-th primitive roots of unity, for each $i=1,\dots, s$ and we define $U=\{(\alpha_1, \dots ,\alpha_s) \tq \alpha_i \in U_{r_i}\}$. If $\F_{q^v}|\F_q$ is an extension field containing every $U_{r_i}$, with $i=1,\dots,s$, it is well known that every abelian code $C$ in $A_q(r_1,\dots,r_s)$ is totally determined by its \textit{root set} $Z(C)=\left\{\alpha\in \F_{q^v}^s: f(\alpha)=0 \text{ for all } f\in C \;\text{ and }\;\alpha^{(r_1,\dots,r_s)}=1\right\}$. Fixed $\alpha=(\alpha_1,\dots,\alpha_s) \in U$, the code $C$ is  determined by its \textit{defining set}, with respect to $\alpha$, which is defined as $\D_\alpha\left(C\right)$ = $\left\{ (a_1,\dots,a_s)\in I \tq f(\alpha_1^{a_1},\dots,\alpha_s^{a_s})=0\text{ for all } f\in C\right\}.$

We recall that for positive integers $b,t, r$, the $q^t$-cyclotomic coset of $b$ modulo $r$ is the set $C_{q^t}(b) = \{a\cdot q^{ti}\in \Z_{r} \tq i\in \N\}$ (the parameter $r$ will be omitted because it will always be clear from the context). Its multivariate version is the notion of $q^t$-orbits (see, for example, \cite{BS}). Given an element $a=(a_1,\dots,a_s)\in I$, we define its \textit{$q^t$-orbit} modulo  $\left(r_1,\ldots,r_s \right)$ as
 \[  Q_t(a)=\left\{\left(a_1\cdot q^i ,\dots, a_s\cdot q^i  \right) \in I \tq i\in \N\right\},\]
 in the case $t=1$ we only write $Q(a)$.  We also recall that the multiplicative order of $a$, modulo $b$ is the first positive integer $m$, such that $b$ divides $a^m-1$. We shall denote it by $\mathcal O_b(a)$.
 
 It is easy to see that, in the semisimple case, for a fixed $\alpha \in U$ and for every abelian code $C$ in $ A_q(r_1,\dots,r_s)$, $\D_\alpha\left(C\right)$ is a disjoint union of $q$-orbits modulo $(r_1,\dots,r_s)$. Conversely, every union of $q$-orbits modulo $(r_1,\dots,r_s)$ determines an abelian code (an ideal) in $ A_q(r_1,\dots,r_s)$  (see, for example, \cite{BS}). 

To define and compute the apparent distance of an abelian code, we will associate to its defining set, with respect to $\alpha \in U$, certain hypermatrix that we will call $q$-orbits hypermatrix. For any $\b{i}\in I$ we write its $k$-th coordinate as $\b{i}(k)$. A hypermatrix, with entries in a set $R$, indexed by $I$ (or an $I$-hypermatrix over $R$) is an $s$-dimensional $I$-array, that we denote by $M=\left(a_{\b{i}}\right)_{\b{i}\in I}$, with $a_{\b{i}}\in R$ \cite{Yamada}. The set of indices, the dimension and the ground field will be omitted if they are clear by the context. In the case $s=2$ we will say that $M$ is a matrix, and for $s=1$ we will call $M$ a vector. We write $M=0$ when all of its entries are $0$; otherwise we may write $M\neq 0$.  As usual, a hypercolumn is defined as $H_M(k,b)=\left\{ a_{\b{i}}\in M\tq \b{i}(k)=b\right\}$, with $1\leq k\leq s$ and $0\leq b<r_k$, where the expression $a_{\b{i}}\in M$ means that $a_{\b{i}}$ is an entry of $M$. A hypercolumn will
  be seen as an $(s-1)$-dimensional hypermatrix. Conversely, note that given $k \in \{1, \dots , s\}$ and $b \in \Z_{r_k}$, a hypermatrix indexed by $\displaystyle\prod_{\doble{j=1}{j\neq k}}^s \Z_{r_j}$ may be viewed as a hypercolum (of certain $s$-dimensional hypermatrix) indexed by $I(k,b)=\{ i \in I \tq \b{i}(k)=b\}$. In the case $s=2$, we refer to hypercolumns as rows or columns and when $s=1$ we only say entries.

Let $D\subseteq I$. The hypermatrix afforded by $D$ is defined as $M=\left(a_{\b{i}}\right)_{\b{i}\in I}$, where $a_{\b{i}}=1$ if $\b{i}\not\in D$ and $a_{\b{i}}=0$ otherwise. When $D$ is an union of $q^t$-orbits we will say that $M$ is the hypermatrix of $q^t$-orbits afforded by $D$, and it will be denoted by $M=M(D)$. For any $I$-hypermatrix $M$ with entries in a ring, we define the support of $M$ as the set $supp(M)=\left\{ \b{i}\in I \tq a_{\b{i}}\neq 0\right\}$, whose complement will be denoted by $\D(M)$. Note that, if $D$ is a union of $q^t$-orbits, then the $q^t$-orbits hypermatrix afforded by $D$ verifies that  $\D(M(D))=D$.

Let $\Q_t$ be the set of all $q^t$-orbits in $I$, for some $t\in \N$. We define a partial ordering over the set of $q^t$-orbits hypermatrices  $\left\{M(D)\tq D= \cup Q,\text{ for some }Q\subseteq \Q_t\right\}$ as follows:

\begin{equation}\label{matrixordering}
M(D)\leq M(D') \Leftrightarrow supp\left(M(D)\right)\subseteq supp\left(M(D')\right).
\end{equation}
Clearly, this condition is equivalent to $D'\subseteq D$.

Let $\F_{q^v}|\&F_q$ be an extension field such that $U\subset \F_{q^v}^s$. The (discrete) Fourier transform of a polynomial $f\in A_q(r_1,\dots,r_s)$ (also called Mattson-Solomon polynomial \cite{Evans}), with respect to $\alpha \in U$, that we denote by $\varphi_{\alpha,f}$, is the polynomial  $\varphi_{\alpha,f}(X)=\sum_{\b{j}\in I} f(\alpha^{\b{j}})X^{\b{j}}.$  Clearly, $\varphi_{\alpha,f}\in A_{q^v}(r_1,\dots,r_s)$ and it is known that the function Fourier transform may be viewed as an isomorphism of algebras $\varphi_{\alpha}:A_{q^v}(r_1,\dots,r_s)\longrightarrow (\F_{q^v}^{|I|},\star)$, where the multiplication ``$\star$'' in $\F_{q^v}^{|I|}$ is defined coordinatewise. So, we may see $\varphi_{\alpha,f}$ as a vector in $\F_{q^v}^{|I|}$ or as a polynomial in $A_{q^v}(r_1,\dots,r_s)$. See \cite[Section 2.2]{Camion} for details.

\section{Apparent distance and BCH bounds}\label{seccion distancia aparente}

In this section we present some technical results, some of them well-known, that we use to compute the minimum apparent distance of a code.

Throughout this section $s,q$ and $r_1,\dots,r_s$ will be positive integers, with $q$ a power of a prime number $p$, such that $p\nmid r_i$, for $i=1,\dots s$. We set $I=\prod_{j=1}^s\Z_{r_j}$ and $X^\b{i}=x_1^{\b{i}(1)}\dots x_s^{\b{i}(s)}$.  We begin with the definition of apparent distance of polynomials and hypermatrices.

\subsection{The apparent distance of polynomials and matrices}

Let $f=\sum_{\b{i}\in I}a_{\b{i}}X^{\b{i}}$ be a polynomial in $ A_{q^v}(r_1,\dots,r_s)$. It is known that for each $k\in\{1,\dots,s\}$, viewed $f$ as a polynomial in $\F_{q^v}[x_1,\dots,x_s]$ it may be written as an element in $ R[x_k]$, where $R = \F_{q^v}[x_1,\dots,x_{k-1},x_{k+1},\dots,x_s]$. In this case, we set $f=f_k=\sum_{b=0}^{r_k-1}f_{k,b}x_k^b$, where $f_{k,b}=\sum_{\doble{\b{i}\in I}{\b{i}(k)=b}}a_{\b{i}}Y^{\b{i}}_k$, and $Y^{\b{i}}_k=X^{\b{i}}/x^{b}_k$. The degree of $f_k$ as polynomial in $ R[x_k]$, denoted by $\deg(f_k)$, as usual, is called the $k$-th degree of $f$. For any $\b{h}\in I$ we denote by $d_k[\b{h}]=d_k[\b{h}](f)$ the $k$-th degree of $\overline{X^{\b{h}}f}$ and by $c_k[\b{h}]=c_k[\b{h}](f)$ the coefficient of $x_k^{d_k[\b{h}]}$; i.e. $(\overline{X^{\b{h}}f})_{k,d_k[\b{h}]}$. In the case $s=1$ we denote by $d[h]=d[h](f)$ the degree of the polynomial $\overline{x^hf}$ and by $c[h]=c[h](f)$ its leading coefficient.

\begin{definition} \cite[p. 21]{Camion}. Let $s,q$, $r_1,\dots,r_s$ and $I$ be as above. Let $f\in A_{q^v}(r_1,\dots,r_s)$. The apparent distance of $f$, denoted by $d^\ast(f)$, is
\begin{enumerate}
 \item $d^\ast(0)=0$.
\item In the case $s=1$ (with $r=r_1$)
\[d^\ast(f)=\max\{r-d[h]\tq 0\leq h\leq r-1\}.\]
\item For $s\geq 2$,
$$d^\ast(f)=\max _{\mathbf h \in I}\left\{\max_{1 \leq k \leq s}  \left\{ d^\ast (c_k[\mathbf h])(r_k-d_k[\mathbf h])\right\}\right\}.$$ 
\end{enumerate}
\end{definition}

\begin{example}\rm{
Set $f=x_2^3-(x_1+1)x_2$ in $A_3(2,4)$ and take $\b{h}=(1,2)$. Then $\overline{X^\b{h} f}=x_1x_2^2f= (x_2-x_2^3)x_1-x_2^3=(-1-x_1)x_2^3+x_1x_2$. In this case, $c_1[(1,2)]=x_2-x_2^3$, $d_1[(1,2)]=1$, $c_2[(1,2)]=-1-x_1$, $d_2[(1,2)]=3$ and we get $d^*(c_1[(1,2)])=2$ and $d^*(c_2[(1,2)])=1$. So that, for $\b h=(1,2)$, we have that  $\max_{1 \leq k \leq 2}  \left\{ d^\ast ( c_k[\b h])(r_k-d_k[\b h])\right\}=2$. One may check that considering all the elements in $I=\Z_2\times\Z_4$ we obtain $d^\ast(f)=4$.
}\end{example}

In \cite[Section 2.3]{Evans}, Sabin computes the apparent distance by using matricial methods for polynomials in two variables. As a generalization of those techniques, we introduce the notion of apparent distance of a hypermatrix.

For a positive integer $r$, we say that a list of canonical representatives  $b_0,\dots,b_l$ in $\Z_r$ is a list of consecutive integers modulo $r$, if for each $0\leq k<l$ we have that $b_{k+1}\equiv b_{k}+1\mod r$. If $b=b_k$ (resp. $b=b_{k+1}$) we denote $b^+=b_{k+1}$ (resp. $b^-=b_k$).

\begin{definition}\label{hipercolumnas cero adyacentes}
 Let $s,q$, $r_1,\dots,r_s$ and $I$ be as above. Let $M$ be a hypermatrix, $k\in \{1,\dots,s\}$, $b\in \Z_{r_k}$ and $H_M(k,b)$ a nonzero hypercolumn. The set of zero hypercolumns adjacents to $H_M(k,b)$ is the set of hypercolumns
	\[CH_M(k,b)=\{H_M(k,b_0), H_M(k,b_1), \dots , H_M(k,b_l)\}\]
such that $H_M(k,b_j)=0$ for all $j\in \{0,\dots, l\}$,  $b_0,\dots,b_l$ is a list of consecutive integers modulo $r_k$, $b^+=b_0$ and $H_M(k,b_l^+)\neq 0$.

In the case $s=1$ we replace hypercolumns by entries.
\end{definition}

\begin{notation}
 We denote by $\omega_M(k,b)$ the value $|CH_M(k,b)|$; in the case $s=1$ we write $\omega_M(b)=\omega_M(1,b)$.
\end{notation}

We point out that for some values $k$ and $b$ it may happen that $\omega_M(k,b)=0$.

\begin{definition}
 Let $s,q$, $r_1,\dots,r_s$ and $I$ be as above. Let $M$ be a hypermatrix over $\F_q$ and $k\in \{1,\dots,s\}$. 
\begin{enumerate}

 \item If $M$ is the  zero hypermatrix, its apparent distance is $d^\ast(0)=0$.

\item In case $s=1$, the apparent distance of a vector $M$ is $d^\ast(M)=\max_{b\in \Z_r}\{\omega_M(b)+1\}$.

\item For $s\geq 2$, we give the definition in two steps:\\
\textit{ (3.1)} The apparent distance of $M$ with respect to the $k$-th variable is $$d_k^* (M) =\max_{b\in \Z_{r_k}}\left\{(\omega_M(k,b)+1)\cdot d^\ast (H_M(k,b)) \right\}.$$
Then

\textit{ (3.2)} the apparent distance of $M$ is $d^\ast (M) =\max_{1 \leq k \leq s}\{ d_k^*(M) \}.$
\end{enumerate}
\end{definition}

As the apparent distance of a hypermatrix is a maximum we focus on those hypercolumns involved in the computation of the maximum by the following definition.

\begin{definition} \label{def involved hypercolumns}
Let $M$ be a nonzero hypermatrix. We say that a pair $(k,b)$, where $k\in \{1,\dots,s\}$ and $b\in \Z_{r_k}$, is an involved pair (in the computation of $d^\ast (M)$) if $d^\ast (M)=(\omega_M(k,b)+1)d^\ast (H_M(k,b)) $. The hypercolumn $H_M(k,b)$ is called, in turn, an involved hypercolumn (in the computation of $d^\ast(M)$).

We denote the set of involved pairs by $Ip(M)$.
\end{definition}

\begin{examples}\label{ejemplos dis apar}\rm{
 \textit{1)}  First, we consider $I=\Z_4$ and set $M=(2,0,0,1)$ over $\F_3$. In this case,  $d^\ast (M)=\max\{ \omega_M(b)+1 \tq 0 \leq b \leq 3 \}=\max\{ \omega_M(0)+1, \omega_M(3)+1\}=\max \{3,1\}=3$ and $Ip(M)=\{0\}$.

 \noindent \textit{2)} 
 Now set $I=\Z_{3} \times \Z_5$ and consider the matrix over $\F_2$,
$$M= \left(
    \begin{array}{ccccc}
     1 & 0 & 0 & 0 & 0  \\
     1&  1 & 0 & 0& 1  \\
     1 & 1 & 0 & 0 & 1  \\
          \end{array}
  \right).$$ 
 We begin by computing $d_1^\ast (M)$. In this case
 \[\begin{array}{|c|c|c|c|}
 \hline
      b & \omega_{M}(1,b) & d^*H_{M}(1,b) & (\omega_{M}(1,b)+1)\cdot d^*H_{M}(1,b)  \\ \hline
      0 & 0 & 5 & 5\\ \hline
      1 & 0 & 3 & 3\\ \hline
      2 & 0 & 3 & 3\\ \hline
\end{array}\]
so $d_1^\ast (M)=5 $. Now, with respect to $x_2$ we get 
$$
\begin{array}{|c|c|c|c|}
 \hline
      b & \omega_{M}(2,b) & d^*H_{M}(2,b) & (\omega_{M}(2,b)+1)\cdot d^*H_{M}(2,b)  \\ \hline
      0 & 0 & 1 & 1\\ \hline
      1 & 2 & 2 & 6\\ \hline
      4 & 0 & 2 & 2\\ \hline
\end{array}
$$ 
Therefore, $d_2^\ast (M) =6$ and hence $d^\ast (M) =6$. We also have that $Ip=\{(2,1)\}$.

\noindent \textit{3)}  Set $q=2$, $r_1=3$, $r_2=3$ and $r_3=5$. Let $M=M(D)$ be the hypermatrix afforded by the set of $2$-orbits 
  \begin{eqnarray*}
D&=& Q(0,0,0)\cup Q(1,0,0)\cup Q(0,1,0)\cup Q(0,0,1)\\
&& \cup Q(1,2,0)\cup Q(1,2,1)\cup Q(1,2,2)\cup Q(1,1,0)\\ 
&&\cup Q(0,1,1)\cup Q(1,0,2)\cup Q(0,1,2).
\end{eqnarray*}

In the following tables we show all computations. We recall that we only have to compute apparent distances on nonzero hypercolumns.

$$
\begin{array}{|c|c|c|c|}
 \hline
      b & \omega_{M}(1,b) & d^*H_{M}(1,b) & (\omega_{M}(1,b)+1)\cdot d^*H_{M}(1,b)  \\ \hline
      1 & 0 & 4 & 4\\ \hline
      2 & 1 & 8 & 16\\ \hline
      
\end{array}
$$ 

$$
\begin{array}{|c|c|c|c|}
 \hline
      b & \omega_{M}(2,b) & d^*H_{M}(2,b) & (\omega_{M}(2,b)+1)\cdot d^*H_{M}(2,b)  \\ \hline
      0 & 0 & 8 & 8\\ \hline
      1 & 0 & 6 & 6\\ \hline
      2 & 0 & 6 & 6\\ \hline
      
\end{array}
$$ 

$$
\begin{array}{|c|c|c|c|}
 \hline
      b & \omega_{M}(3,b) & d^*H_{M}(3,b) & (\omega_{M}(3,b)+1)\cdot d^*H_{M}(3,b)  \\ \hline
      1 & 0 & 6 & 6\\ \hline
      2 & 0 & 6 & 6\\ \hline
      3 & 0 & 6 & 6\\ \hline
      4 & 1 & 6 & 12\\ \hline     
\end{array}
$$

So that  $d_1^\ast (M)=16$, $ d_2^\ast (M)=6$ and $ d_3^\ast (M)= 6$.
Hence $d^\ast (M)=16$ and $Ip(M)=\{(1,2)\}$.

}\end{examples}

Now we present the relationship between the apparent distance of hypermatrices and polynomials. Let $s,q$, $r_1,\dots,r_s$, $I$ and $A_q(r_1,\dots,r_s)$ be as above. We recall that for any polynomial $f=\sum_{\b{i}\in I}a_{\b{i}}X^\b{i}\in A_q(r_1,\dots,r_s)$ the hypermatrix of coefficients of $f$ is $M(f)=\left(a_{\b{i}}\right)_{\b{i}\in I}$. Moreover, for any $k\in \{1,\dots,s\}$ writing $f=f_k=\sum_{b=0}^{r_k-1}f_{k,b}x_k^b$ one may check that $M(f_{k,b})=H_M(k,b)$. The following theorem, that generalizes the arguments in \cite[pp. 189-190]{Evans}, relates the apparent distances of $f$ and $M(f)$.

\begin{theorem}
 Let $s,q$, $r_1,\dots,r_s$, $I$, and $A_q(r_1,\dots,r_s)$ be as above. For any polynomial $f\in A_q(r_1,\dots,r_s)$ with coefficient hypermatrix $M(f)$, the equality $d^\ast(f)=d^\ast(M(f))$ holds.
\end{theorem}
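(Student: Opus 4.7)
I would proceed by induction on the number of variables $s$. For the base case $s=1$, with $f=\sum_{i=0}^{r-1} a_i x^i$, the coefficient of $x^j$ in $\overline{x^h f}$ is $a_{(j-h)\bmod r}$, so $d[h]=\max\{(i+h)\bmod r \tq a_i\neq 0\}$. Hence $r-d[h]$ is exactly the length of the empty cyclic arc that starts just after the highest shifted nonzero and reaches the wrap at $r$. Across all shifts $h$, this length is bounded above by the largest gap between consecutive nonzeros of $M(f)$ in cyclic order, and for every nonzero position $b$ one can choose $h$ so that the next cyclic nonzero after $b$ is sent precisely to position $0$; this realises $r-d[h]=\omega_M(b)+1$ and yields $d^*(f)=\max_b\{\omega_M(b)+1\}=d^*(M(f))$.

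For the inductive step, fix $k\in\{1,\dots,s\}$ and write $f=\sum_{b=0}^{r_k-1} f_{k,b}\, x_k^b$, so that $H_M(k,b)=M(f_{k,b})$. Splitting $X^h=x_k^{h_k}\cdot\prod_{j\neq k}x_j^{h_j}$ and reducing modulo $x_k^{r_k}-1$, the coefficient of $x_k^{c}$ in $\overline{X^h f}$, viewed as a polynomial in $x_k$, is the image of $\left(\prod_{j\neq k}x_j^{h_j}\right) f_{k,(c-h_k)\bmod r_k}$ in the corresponding $(s-1)$-variable quotient. Because multiplication by a monomial is a unit in that quotient, this coefficient vanishes if and only if $f_{k,(c-h_k)\bmod r_k}=0$; therefore $d_k[h]$ depends only on $h_k$ and on the set $S_k=\{b\tq H_M(k,b)\neq 0\}$, and $c_k[h]$ is a monomial multiple of $f_{k,b_*(h_k)}$, where $b_*(h_k)\in S_k$ is the slice pushed to the top.

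At this point I would invoke a short auxiliary fact, namely that $d^*$ is invariant under multiplication by a monomial; this is immediate from the definition by reindexing the outer maximum over the shift. Combined with the inductive hypothesis applied to the $(s-1)$-variable polynomial $f_{k,b_*(h_k)}$, it gives $d^*(c_k[h])=d^*(f_{k,b_*(h_k)})=d^*(H_M(k,b_*(h_k)))$. The base-case analysis, applied in the single variable $x_k$ to the set $S_k$, shows that for each $b\in S_k$ the quantity $r_k-d_k[h]$ takes the value $\omega_M(k,b)+1$ for some $h_k$ with $b_*(h_k)=b$, and for every $h_k$ one has $r_k-d_k[h]\leq\omega_M(k,b_*(h_k))+1$. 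Combining,
\[
\max_{h\in I} d^*(c_k[h])(r_k-d_k[h])=\max_{b\in\Z_{r_k}}(\omega_M(k,b)+1)\,d^*(H_M(k,b))=d_k^*(M),
\]
and maximising over $k$ yields $d^*(f)=d^*(M(f))$.

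The main technical obstacle I foresee is the decoupling in the inductive step: one must justify that the outer maximum over $h\in I$ on the polynomial side splits cleanly into a maximum over $h_k$ alone (which controls the factor $r_k-d_k[h]$ and selects $b_*(h_k)$) and a maximum over the remaining coordinates of $h$ (which, by monomial invariance of $d^*$, contribute only through $d^*(f_{k,b_*(h_k)})$ irrespective of their values). Once this decoupling and the base case are pinned down, matching the two sides of the desired equality is routine.
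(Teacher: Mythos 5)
Your proposal is correct and follows essentially the same route as the paper's proof: induction on $s$, the same base-case argument realising $r-d[h]=\omega_M(b)+1$ by an explicit choice of shift, and the same identification $H_M(k,b)=M(f_{k,b})$ in the inductive step. The only difference is one of exposition: you make explicit the monomial-invariance of $d^*$ and the decoupling of the maximum over $\mathbf{h}$ into the $x_k$-shift and the remaining coordinates, steps the paper leaves implicit by choosing $\mathbf{h}$ supported on the $k$-th coordinate for one inequality and stating the other tersely.
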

\begin{proof}
We proceed by induction on $s$. Recall that a hypercolumn may be viewed as an $(s-1)$-dimensional hypermatrix.  In case $s=1$, we consider a polynomial $f=\sum_{i=0}^{r-1}a_ix^i$ and take its matrix of coeffcients $M=M(f)=(a_0 \dots a_{r-1})$. Take any $i\in supp(f)$ and set $h=r-1-\omega_M(i)-i$. Since $a_{i+j}=0$ for all $0 < j\leq \omega_M(i)$ then $d[h]=\deg(\overline{x^hf})=i+h$ and its leading coefficient is $c[h]=a_i$. Hence $\omega_M(i)=r-1-d[h]$ which give us $d^\ast(M(f))\leq d^\ast (f)$. Now, for any $h\in I=\Z_r$, we have that $\deg(\overline{x^hf})=d[h]$ then $\omega_M(d[h]-h)\geq r-1-d[h]$, and hence $d^\ast (f)\leq d^\ast(M(f))$.

Suppose that the result is true for every integer $1\leq t < s$. We will prove that it is also true for $s$. Consider $f=\sum_{\b{i}\in I}a_{\b{i}}X^{\b{i}}$ and $M=M(f)$. If we write $f=f_k=\sum_{b=0}^{r_k-1}f_{k,b}x_k^b$ then $M(f_{k,b})=H_M(k,b)$ for all $k \in \{1,\dots, s\}$ and $b \in \Z_r$. Suppose that $H_M(k,b)\neq 0$ for some $b\in\{0,\dots,r_k-1\}$ and $k\in\{1,\dots,s\}$, and consider $\b{h}\in I$ such that $\b{h}(k)=r_k-1-\omega_M(k,b)-b$ and $\b{h}(k')=0$ for all $k'\neq k$. Then $d_k[\b{h}]=b+\b{h}(k)$ and $c_k[\b{h}]=f_{k,b}$.

So, $r_k-d_k[\b{h}]=\omega_M(k,b)+1$ and $d^\ast(f_{k,b})=d^\ast(H_M(k,b))$ by induction hypothesis. This implies that $d^\ast(M)\leq d^\ast (f)$.

Now take $\b{h}\in I$, $k\in\{1,\dots,s\}$ and consider $\overline{X^{\b{h}}f}$, $d_k[\b{h}]$ and $c_k[\b{h}]$. In this case, $\omega_M\left(k,d_k[\b{h}]-\b{h}(k)\right)\geq r_k-1-d_k[\b{h}]$ and hence $d^\ast(f)\leq d^\ast (M)$.
\end{proof}

\subsection{The apparent distance of an abelian code}

Now we are going to define the apparent distance of an abelian code. We recall that as $A_q(r_1,\dots,r_s)$ is a semisimple ring, every ideal is generated by an idempotent which decompose as sum of primitive idempotents. Having in mind this fact, it is easy to see that the following definition is equivalent to Camion's definition in \cite{Camion}.

\begin{definition}\label{apparentdistance}
 Let $C$ be a code in $A_q(r_1,\dots,r_s)$. The apparent distance of $C$, with respect to $\alpha \in U$, is $d_{\alpha}^* (C)= \min\left\{d^\ast \left(M(\varphi_{\alpha,e})\right) \tq 0\neq e^2 = e \in C\right\},$ where $\varphi_{\alpha,e}$ denotes the image of $e$ under the discrete Fourier transform, with respect to $\alpha$, as we denoted in the previous section. The apparent distance of $C$ is $d^\ast (C)= \max\left\{d_{\beta}^\ast (C) \tq \beta \in U \right\}$. We also define the set of optimized roots of $C$ as $\mathcal R(C)=\{\beta \in U \tq d^\ast (C)=d_{\beta}^\ast (C))\}.$
\end{definition}

In \cite{Camion} one may see that $d_\alpha^\ast(C)=\min\{d_\alpha^\ast(\varphi_{\alpha,f}:f\in C\}$ and that $\omega(f)\geq d_\alpha^\ast\left(\varphi_{\alpha,f}\right)$ for all $f\in C$. These facts imply that the apparent distance is a lower bound for the minimum distance of any abelian code; in fact, the apparent distance of any cyclic code is exactly the maximum of all its BCH bounds (what P. Camion calls the BCH bound of a cyclic code) \cite[pp. 21-22]{Camion}.

\begin{theorem}[Camion\cite{Camion}]
 For any abelian code $C$ in $A_q(r_1,\dots,r_s)$ the inequality $d^\ast(C)\leq d(C)$ holds. 
\end{theorem}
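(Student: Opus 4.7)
The plan is to derive the inequality directly from the two remarks quoted from Camion immediately before the theorem, namely: (a) $d_\alpha^\ast(C) = \min\{d_\alpha^\ast(\varphi_{\alpha,f}) : 0 \neq f \in C\}$, and (b) $\omega(f) \geq d_\alpha^\ast(\varphi_{\alpha,f})$ for every $f \in C$, where $\omega$ denotes the Hamming weight. Once (a) and (b) are accepted, the proof is a short chain of inequalities for each fixed $\alpha \in U$ followed by maximizing over $\alpha$; no further hypermatrix machinery from the preceding subsection is required.

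In detail, I would pick a nonzero $f_0 \in C$ of minimum Hamming weight, so that $\omega(f_0) = d(C)$. For any $\alpha \in U$, applying (b) to $f_0$ yields $d(C) = \omega(f_0) \geq d_\alpha^\ast(\varphi_{\alpha,f_0})$, and applying (a) gives $d_\alpha^\ast(\varphi_{\alpha,f_0}) \geq d_\alpha^\ast(C)$. Chaining these two, one gets $d(C) \geq d_\alpha^\ast(C)$. Since this holds for every $\alpha \in U$ and the left-hand side is independent of $\alpha$, taking the maximum over $\alpha$ on the right-hand side produces
\[
d(C) \;\geq\; \max_{\alpha \in U} d_\alpha^\ast(C) \;=\; d^\ast(C),
\]
which is the desired bound.

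The main substance is not in the deduction above but in the two cited facts. Fact (a) is a bookkeeping step reconciling the definition used in this paper (minimum over nonzero idempotents of $C$) with the equivalent minimum over all nonzero codewords: for any $f \in C$ one has $f = f \cdot e$, where $e$ is the generating idempotent, so the support of $\varphi_{\alpha,f}$ is contained in the support of $\varphi_{\alpha,e}$, and by the monotonicity implicit in the hypermatrix definition of apparent distance (fewer nonzero hypercolumns can only enlarge the windows of consecutive zeros), the apparent distance can only grow. The more serious ingredient is (b), the multivariate analogue of the classical BCH bound: it says that a sufficiently long ``apparent run of zeros'' in the Fourier transform of $f$ forces $f$ to have many nonzero coefficients. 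Camion proves this by a recursive Vandermonde-determinant argument on each variable $x_k$, and this is the step that would be technically costly to reprove from scratch; here one simply invokes it as a black box.
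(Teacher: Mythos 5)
Your proposal is correct and follows essentially the same route as the paper, which itself gives no independent proof but merely cites the two facts from Camion --- $d_\alpha^\ast(C)=\min\{d^\ast(\varphi_{\alpha,f}) : 0\neq f\in C\}$ and $\omega(f)\geq d^\ast(\varphi_{\alpha,f})$ --- and observes that they imply the bound. Your chain $d(C)=\omega(f_0)\geq d^\ast(\varphi_{\alpha,f_0})\geq d_\alpha^\ast(C)$ followed by maximizing over $\alpha\in U$ is exactly the intended deduction, with the genuine content correctly identified as residing in the cited facts (especially the Vandermonde-type argument behind the second one), which both you and the paper treat as a black box.
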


Note that, if $e\in A_{q}(r_1,\dots,r_s)$ is an idempotent and $E$ is the ideal generated by $e$ then for any $\alpha \in U$ we have that $\varphi_{\alpha, e}\star\varphi_{\alpha, e}=\varphi_{\alpha, e}$.  Then if $\varphi_{\alpha, e}=\sum_{\b{i}\in I} a_{\b{i}}X^{\b{i}}$ we have that $a_{\b{i}}\in \{1,0\}\subseteq \F_q$ and $a_{\b{i}}=0$ if and only if $\b{i}\in \D(E)$. So that $M(\varphi_{\alpha,e})=M(\D_{\alpha}(E))$. Conversely, a $q$-orbits hypermatrix afforded by a set $D$ which is union of $q$-orbits corresponds with the image, under the Fourier transform with respect to some $\alpha \in U$, of an idempotent; to witt, the generating idempotent of the ideal in $A_q(r_1,\dots,r_s)$ determined by $D$.

Now let $C$ be an abelian code, $\alpha \in U$ and $M$ the hypermatrix aforded by $\D_{\alpha}(C)$. For any $q$-orbits hypermatrix $P\leq M$ [see (\ref{matrixordering})] there exists an idempotent $e'\in C$ such that $P=M(\varphi_{\alpha, e'})$. So we may conclude that the apparent distance of an abelian code may be computed by means of $q$-orbits hypermatrices $P\leq M(\varphi_{\alpha, e})$; that is
\begin{eqnarray*}
  \min\{d^\ast(P)\tq 0\neq P\leq M\}= \\ \min\{d^\ast(M(\varphi_{\alpha, e}))\tq
  0\neq e^2=e\in C\}= d_\alpha^\ast(C).  
\end{eqnarray*}
 This fact drives us to give the following definition.
\begin{definition}
 In the setting described above, for a $q^t$-orbits hypermatrix $M$, its minimum apparent distance is
	\[mad(M)=\min\{d^\ast(P)\tq 0\neq P\leq M\}.\]
\end{definition}

Finally, in the next theorem we set the relationship between the apparent distance of an abelian code and the minimum apparent distance of the coefficient hypermatrices of the Fourier transforms of its generating idempotent.

\begin{theorem}
Let $C$ be an abelian code in $A_q(r_1,\dots,r_s)$ and let $e$ be its generating idempotent. Then $d_{\alpha}^\ast (C)=mad\left(M(\varphi_{\alpha,e})\right)$ ($\alpha\in U$). Therefore, $d^\ast (C)=\max\{mad\left(M(\varphi_{\alpha,e})\right): \alpha\in U\}$.
\end{theorem}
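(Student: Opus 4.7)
The proof plan is essentially to assemble three ingredients already prepared in the excerpt and observe that they fit together definitionally. I will handle the first equality $d_\alpha^\ast(C)=mad(M(\varphi_{\alpha,e}))$ first; the second statement then follows by taking the maximum over $\alpha\in U$ on both sides.

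First I would unpack the right-hand side directly from the definitions. By definition, $mad(M(\varphi_{\alpha,e}))=\min\{d^\ast(P)\tq 0\neq P\leq M(\varphi_{\alpha,e})\}$, where the minimum ranges over $q$-orbits hypermatrices below $M(\varphi_{\alpha,e})$ in the partial order~(\ref{matrixordering}). So the whole task is to match this set of $P$'s with the set $\{M(\varphi_{\alpha,e'}):0\neq (e')^2=e'\in C\}$ appearing in the definition of $d_\alpha^\ast(C)$.

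The key step is the bijection itself, which is essentially the content of the paragraph immediately preceding the theorem. In one direction, if $0\neq e'=(e')^2\in C$, let $E'=\langle e'\rangle$. Since $E'\subseteq C$, the containment of root sets reverses, giving $\mathcal D_\alpha(C)\subseteq \mathcal D_\alpha(E')$, so by definition of the partial order, $M(\varphi_{\alpha,e'})=M(\mathcal D_\alpha(E'))\leq M(\mathcal D_\alpha(C))=M(\varphi_{\alpha,e})$, and this hypermatrix is nonzero because $e'\neq 0$. Conversely, given a nonzero $q$-orbits hypermatrix $P\leq M(\varphi_{\alpha,e})$, let $D=\mathcal D(P)$; this is a union of $q$-orbits which contains $\mathcal D_\alpha(C)$, hence determines an ideal $E'\subseteq C$ whose generating idempotent $e'$ is nonzero and satisfies $M(\varphi_{\alpha,e'})=P$ by the semisimple correspondence recalled after~(\ref{matrixordering}). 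This establishes the bijection between nonzero idempotents of $C$ and nonzero $q$-orbits hypermatrices $P\leq M(\varphi_{\alpha,e})$, and under this bijection the quantity $d^\ast(P)$ on one side equals $d^\ast(M(\varphi_{\alpha,e'}))$ on the other. Chaining equalities:
\[
mad(M(\varphi_{\alpha,e}))=\min\{d^\ast(P)\tq 0\neq P\leq M(\varphi_{\alpha,e})\}=\min\{d^\ast(M(\varphi_{\alpha,e'}))\tq 0\neq (e')^2=e'\in C\}=d_\alpha^\ast(C).
\]

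For the second assertion, I would simply take the maximum over $\alpha\in U$ on both sides of the equality just proved, invoking the definition $d^\ast(C)=\max\{d_\beta^\ast(C)\tq \beta\in U\}$ given in Definition~\ref{apparentdistance}. There is no serious obstacle; the only point that requires care is making the bijection between idempotents in $C$ and sub-hypermatrices of $M(\varphi_{\alpha,e})$ fully explicit, since everything else is purely a matter of rewriting definitions.
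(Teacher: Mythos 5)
Your proposal is correct and follows essentially the same route as the paper, which establishes exactly this correspondence between nonzero idempotents $e'\in C$ and nonzero $q$-orbits hypermatrices $P\leq M(\varphi_{\alpha,e})$ in the paragraph immediately preceding the theorem (using $M(\varphi_{\alpha,e'})=M(\D_\alpha(\langle e'\rangle))$ and the ordering~(\ref{matrixordering})), and then reads off the chain of equalities $mad(M(\varphi_{\alpha,e}))=\min\{d^\ast(M(\varphi_{\alpha,e'}))\tq 0\neq (e')^2=e'\in C\}=d_\alpha^\ast(C)$. Taking the maximum over $\alpha\in U$ then gives the second assertion, just as you describe.
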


In next section we present a technique to compute the minimum apparent distance of a hypermatrix and thereby to compute the apparent distance of an abelian code.

\begin{remark}\label{distintas raices}\rm{
Let us note that to get the maximum value that defines $d^\ast(C)$ we do not need to compute the maximum apparent distance over all the elements of $U$. Indeed, let $Q(a_1),Q(a_2),\dots ,Q(a_h)$ be all different $q$-orbits modulo $(r_1,\dots , r_s)$ and fix the representatives $a_1, \dots , a_h$. Chose $\alpha \in U$ to get a defining set $\D_{\alpha}(C)$. We look for the elements $\beta=(\beta_1, \dots , \beta_s) \in U$ for which it is possible that $D_{\beta}(C)\neq D_{\alpha}(C)$. In this way, $\beta \in U$ has to satisfy $\beta^{\mathbf{a}_iq^t}=\alpha$ for some $t\in \Z$ and  $\mathbf{a}_i=(a_{i1}, \dots , a_{is})$ such that $\gcd(a_{ij},r_j)=1$ with $j=1,\dots, s$.  In this case, it is clear that $D_{\beta}(C)=\mathbf{a}_i \cdot D_{\alpha}(C)$, where the multiplication has the obvious meaning. Moreover, since $D_{\beta^{\mathbf{a}_i}}(C)= D_{(\beta_1^{a_{i1}q}, \dots , \beta_s^{a_{is}q})}(C)$ and  $|Q(\mathbf{a}_i)|=\gcd \{\mathcal O_{r_i}(q)\}_{i=1}^s$  for all $\mathbf{a}_i=(
 a_{i1}, \dots , a_{is})$ such that $\gcd(a_{ij},r_j)=1$, $j=1,\dots, s$, we have to consider at most $\frac{\prod_{i=1}^s \phi(r_i)}{\gcd\{\mathcal O_{r_i}(q)\}_{i=1}^s}$ defining sets or elements in $U$. 
}\end{remark}

Then, we denote by $K(r_1, \dots r_s)=\{a_i=(a_{i1}, \dots , a_{is})\tq \gcd(a_{ij},r_j)=1, j=1,\dots, s, i=1, \dots, h\}$ and fixed $\alpha\in U$ we define $\mathcal R_\alpha=\{\beta \in U \tq \beta^{a_i}=\alpha, a_i \in K(r_1, \dots, r_s)\}$.  So, in practice, fixed $\alpha\in U$, $d^\ast (C)= \max\left\{d_{\beta}^\ast (C) \tq \beta \in \mathcal R_\alpha \right\}$.

\section{Computing the minimum apparent distance of a hypermatrix}

Let $s,q$, $r_1,\dots,r_s$ and $I$ be as in the preceding section, and let $\Q_t$ be the set of all $q^t$-orbits in $I$, for some $t\in \N$. For an arbitrary subset $\Q' \subseteq \Q_t$ we set  $D=\displaystyle \cup_{Q\in \Q'} Q$, and construct $M=M(D)$, the $q^t$-orbits hypermatrix afforded by $D$. Consider an arbitrary hypercolumn of $M$, say $H_M(k,b)$, where $k\in \{1,\dots,s\}$ and $b\in \Z_{r_k}$. Recall that $I(k,b)=\{\mathbf i \in I: \mathbf{i}(k)=b\}$, $H_M(k,b)=\{a_{\mathbf{i}} \in M: \mathbf{i} \in I(k,b)\}$ and consider the set $D_M(k,b)=I(k,b)\setminus supp(H_M(k,b))$. 

We claim that $H_M(k,b)$ may be viewed as a $(s-1)$-dimensional hypermatrix of $q^{t'}$-orbits, where $t'=t|C_{q^t}(b)|$ and $C_{q^t}(b)$ is the $q^{t}$-cyclotomic coset of $b$, modulo $r_k$, and, as such, it is the hypermatrix afforded by $D_M(k,b)$. To prove this, first note that for each $\b{i}\in I(k,b)$,  we have that $q^{t'}\mathbf{i}(k)=\mathbf{i}(k)=b$, hence $q^{t'}\mathbf{i} \in I(k,b)$; that is, $I(k,b)$ is closed under $q^{t'}$-orbits.  Now let $\mathbf{i} \in I(k,b)$ be such that  $a_{\mathbf{i}}=0$.  Since $\mathbf{i} \in I$, there exists $Q \in \Q_t$ such that  $\mathbf{i} \in Q$ and then $q^{t'}\mathbf{i} \in Q$, which implies that $a_{q^{t'}\mathbf{i}}=0$, because $M$ is a $q^t$-orbits hypermatrix.  This shows that $H_M(k,b)$ is a $q^{t'}$-orbits hypermatrix. The fact that $\D\left(H_M(k,b)\right)=D_M(k,b)$ is obvious.

\begin{proposition}
Let $M$ be a $q^t$-orbits hypermatrix and $N<M$. Then,  for each $k=1,\dots,s$ and $b\in \Z_{r_k}$, $H_N(k,b)\leq H_M(k,b)$, viewed as $q^{t'}$-orbits hypermatrices, where $t'=t|C_{q^{t}}(b)|$ and $C_{q^t}(b)$ is the $q^{t}$-cyclotomic coset of $b$, modulo $r_k$.
\end{proposition}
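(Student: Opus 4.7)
My plan is to reduce the statement directly to a support-chasing argument, leveraging the structural fact about hypercolumns already established in the paragraph immediately preceding the proposition.

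First I would unpack the hypotheses. By the definition of the ordering in (\ref{matrixordering}), $N<M$ means $supp(N)\subsetneq supp(M)$, or equivalently (taking complements in $I$) $\D(M)\subsetneq \D(N)$. Since $M$ is a $q^t$-orbits hypermatrix, so is $N$: its defining set $\D(N)$ is a union of $q^t$-orbits, because $N\leq M$ forces $\D(N)\supseteq \D(M)$ and $N$ belongs to the poset of $q^t$-orbits hypermatrices. Thus the preceding paragraph of the paper applies verbatim to $N$ as well: both $H_M(k,b)$ and $H_N(k,b)$ are $q^{t'}$-orbits hypermatrices, with $t'=t\,|C_{q^t}(b)|$, whose defining sets are
\[
D_M(k,b)=\D(M)\cap I(k,b), \qquad D_N(k,b)=\D(N)\cap I(k,b).
\]

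Next, I would transfer the global support inclusion to the hypercolumn level. Intersecting $\D(M)\subseteq \D(N)$ with $I(k,b)$ gives $D_M(k,b)\subseteq D_N(k,b)$. Taking complements inside $I(k,b)$ this reads
\[
supp\bigl(H_N(k,b)\bigr)\subseteq supp\bigl(H_M(k,b)\bigr),
\]
which, by (\ref{matrixordering}) applied in the poset of $q^{t'}$-orbits hypermatrices indexed by $I(k,b)$, is precisely the statement $H_N(k,b)\leq H_M(k,b)$.

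There is essentially no obstacle here beyond bookkeeping: the only point that might require mild care is ensuring that $H_N(k,b)$ really sits in the same poset as $H_M(k,b)$, i.e. that it is a $q^{t'}$-orbits hypermatrix on the same index set. That follows from the fact that the cyclotomic coset $C_{q^t}(b)\pmod{r_k}$ depends only on $b$, $t$ and $r_k$ (not on the hypermatrix), so the same value $t'$ governs both $H_M(k,b)$ and $H_N(k,b)$, and the argument from the preceding paragraph applies identically to $N$.
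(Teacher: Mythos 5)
Your proof is correct and follows essentially the same route as the paper's: it deduces $\D(M)\subseteq\D(N)$ from $N<M$, intersects with $I(k,b)$ to get $D_M(k,b)\subseteq D_N(k,b)$, and invokes the preceding paragraph to recognize both hypercolumns as $q^{t'}$-orbits hypermatrices and translate the containment into $H_N(k,b)\leq H_M(k,b)$. The paper's argument is a compressed version of exactly these steps.
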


\begin{proof}
Since $N<M$ then $\D(M) \subset \D(N)$; so that $D_M(k,b) \subseteq D_N(k,b)$. Having in mind the ordering in  (\ref{matrixordering}), the claim in the paragraph prior this result shows us that $H_N(k,b)\leq H_M(k,b)$ viewed as $q^{t'}$-orbits hypermatrices.
\end{proof}

\begin{lemma}\label{subhipermat soporte maximo}
 Let $M$ be a nonzero $q^t$-orbits hypermatrix. Consider $k \in \{1, \dots , s\}$ and $b \in \Z_{r_k}$.  Let $A$ be an $(s-1)$-dimensional hypermatrix indexed by $\prod_{\doble{j=1}{j\neq k}}^s \Z_{r_j}$, such that $supp(A) \subseteq supp(H_M(k,b))$.  Then, there exists $N\leq M$ such that:
\begin{enumerate}
\item $supp(H_N(k,b)) \subseteq supp(A)$.
\item If $P < M$ verifies that $supp(H_P(k,b)) \subseteq supp(A)$ then $P \leq N$.
\item If $A$ is a $q^{t|C_{q^t}(b)|}$-orbits hypermatrix where $C_{q^t}(b)$ is the $q^{t}$-cyclotomic coset of $b$, modulo $r_k$, then $A = H_N(k,b)$.
\end{enumerate}
Hence, $N$ is the $q^t$-orbits hypermatrix with maximum support (with respect to the inclusion) such that $supp(H_N(k,b)) \subseteq supp(A)$.
\end{lemma}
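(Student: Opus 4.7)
The plan is to construct $N$ explicitly as the $q^t$-orbits hypermatrix whose defining set is
$$D_N \;=\; \D(M)\ \cup\ \bigcup_{\b{i}\in I(k,b)\setminus supp(A)} Q_t(\b{i}),$$
where $A$ is regarded, as in the paragraph preceding the Proposition, as a hypercolumn indexed by $I(k,b)$. By construction $D_N$ is a union of $q^t$-orbits containing $\D(M)$, so $N=M(D_N)$ is well defined and satisfies $N\leq M$ in the ordering \eqref{matrixordering}.

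Parts (1) and (2) then reduce to bookkeeping. For (1), every $\b{i}\in I(k,b)\setminus supp(A)$ lies in $D_N$, hence $H_N(k,b)$ vanishes there, giving $supp(H_N(k,b))\subseteq supp(A)$. For (2), let $P<M$ be a $q^t$-orbits hypermatrix with $supp(H_P(k,b))\subseteq supp(A)$. Each $\b{i}\in I(k,b)\setminus supp(A)$ then lies in $\D(H_P(k,b))\subseteq \D(P)$, and since $\D(P)$ is a union of $q^t$-orbits it must contain the whole orbit $Q_t(\b{i})$. Combined with $\D(M)\subseteq \D(P)$ (which follows from $P\leq M$), this yields $D_N\subseteq \D(P)$, i.e.\ $P\leq N$.

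For (3), assume $A$ is a $q^{t'}$-orbits hypermatrix with $t'=t|C_{q^t}(b)|$. In view of (1) it suffices to show $\D(H_N(k,b))\subseteq \D(A)$. The crucial point is the identity $Q_t(\b{i})\cap I(k,b)=Q_{t'}(\b{i})$, which holds because the exponents $j\in\N$ satisfying $q^{jt}b\equiv b\pmod{r_k}$ are precisely the multiples of $|C_{q^t}(b)|$. Using this,
$$\D(H_N(k,b))\;=\;D_N\cap I(k,b)\;=\;D_M(k,b)\,\cup\,\bigcup_{\b{i}\in I(k,b)\setminus supp(A)} Q_{t'}(\b{i}).$$
Each $Q_{t'}(\b{i})$ in the union is contained in $\D(A)$ because $A$ is a $q^{t'}$-orbits hypermatrix vanishing at $\b{i}$, while $D_M(k,b)\subseteq \D(A)$ follows from $supp(A)\subseteq supp(H_M(k,b))$. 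Hence $\D(H_N(k,b))=\D(A)$, and since both have entries in $\{0,1\}$, we conclude $A=H_N(k,b)$.

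The only non-bookkeeping step is the orbit identification $Q_t(\b{i})\cap I(k,b)=Q_{t'}(\b{i})$ used in (3); the rest reduces to manipulation of defining sets afforded by the ordering \eqref{matrixordering} and by the definition of $q^t$-orbits hypermatrices.
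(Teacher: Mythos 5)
Your proof is correct and uses essentially the same construction as the paper: your defining set $D_N$ coincides with the paper's $\D(N)=\D(M)\cup\bigcup_{\b{i}\in supp(H_M(k,b))\setminus supp(A)}Q_t(\b{i})$, since the extra orbits you add over $D_M(k,b)$ already lie in $\D(M)$. The only real divergence is in part (3), where the paper argues by contradiction (building an $N'$ with strictly larger support and invoking part (2)), whereas you compute $\D(H_N(k,b))$ directly via the identity $Q_t(\b{i})\cap I(k,b)=Q_{t'}(\b{i})$ — a slightly cleaner route that makes explicit the orbit-restriction fact the paper establishes in the paragraph preceding the lemma.
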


\begin{proof}
 Set $\bar{A}=supp(H_M(k,b)) \setminus supp(A)$ and let $N$ be the $q^t$-orbits hypermatrix such that $\D(N)=\D(M)\cup \left( \cup_{\mathbf{i} \in \bar{A}} Q_t(\mathbf{i}) \right)$. It is clear that $N \leq M$ and $supp(H_N(k,b))\subseteq supp(H_M(k,b))$. Let us see that the conditions of our lemma are satisfied.  
 
\textit{1.} Take any $\mathbf{i} \in supp(H_N(k,b))$. Then $Q_t(\mathbf{i}) \cap \D(N) = \emptyset$ and then $Q_t(\mathbf{i}) \cap \bar{A} = \emptyset$, so that $Q_t(\mathbf{i}) \subseteq supp(A)$, because $\mathbf i\in supp(H_M(k,b))$. Hence $\mathbf{i} \in supp(A)$.

\textit{2.} Suppose that $P < M$ verifies that $supp(H_P(k,b)) \subseteq supp(A)$. Take any $\mathbf{i} \in supp(P)$; that is, $\mathbf{i} \notin \D(P)$.  Then $Q_t(\mathbf{i}) \cap \D(P)= \emptyset$. We are going to see that $Q_t(\mathbf{i}) \cap \D(N)= \emptyset$. 

As $P<M$, then $\D(M) \subset \D(P)$ and so $Q_t(\mathbf{i}) \cap \D(M)= \emptyset$, then $\mathbf{i} \notin \D(M)$. So, if $Q_t(\mathbf{i}) \cap \D(N) \neq \emptyset$, then  there exists  $\mathbf j\in \bar A$ such that $Q_t(\b{i})=Q_t(\mathbf j)$. From here, we have that $\mathbf j\in supp(P)$ and since  $~\mathbf j(k)=b$ then $\mathbf j\in supp(H_P(k,b))\subseteq supp(A)$, which is impossible.

\textit{3.} Assume that $A$ is a $q^{t|C_{q^t}(b)|}$-orbits hypermatrix. Suppose that $supp(A)\setminus supp(H_N(k,b)) \neq \emptyset$ and consider $\mathbf{i} \in supp(A)\setminus supp(H_N(k,b))$. Then $Q_t(\mathbf{i}) \subseteq \D(N)$.  Let $N'<M$  be the hypermatrix afforded by $\D(N)\setminus Q_t(\mathbf i)$. Then $N <N'$ and 
$$D_{N'}(k,b)= D_N(k,b) \setminus Q_{t|C_{q^t}(b)|}(\mathbf{i}) \supseteq \D(A).$$ 
Hence,  $H_{N'}(k,b) \leq A$, which contradicts the statement \textit{(2)} which was already proved.
\end{proof}

By applying repeatedly lemma above we get the following corollary.

\begin{corollary}\label{subhipermat de soporte maximo corolario}
 Let $M$ be a nonzero $q^t$-orbits hypermatrix. Consider the list of pairs $\left(k_1,b_1\right),\dots,\left(k_l,b_l\right)$ where $1\leq k_j\leq s$ and $b_j \in \Z_{r_{k_j}}$ with $j=1,\dots, l$. Then, there exists a hypermatrix $N\leq M$ with maximum support such that $H_N\left(k_j,b_j\right)=0$, where $j=1,\dots,l$.
\end{corollary}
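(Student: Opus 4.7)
The plan is to proceed by induction on $l$, the length of the list of pairs, using Lemma \ref{subhipermat soporte maximo} as the single tool — exactly as the authors suggest with the phrase ``applying repeatedly lemma above''.

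For the base case $l=1$, set $A$ to be the zero $(s-1)$-dimensional hypermatrix indexed by $\prod_{j\neq k_1}\Z_{r_j}$. Then $supp(A)=\emptyset\subseteq supp(H_M(k_1,b_1))$ is automatic, and $A$ is vacuously a $q^{t|C_{q^t}(b_1)|}$-orbits hypermatrix. Lemma \ref{subhipermat soporte maximo}(3) supplies $N_1\leq M$ with $H_{N_1}(k_1,b_1)=A=0$, and statement (2) of the same lemma guarantees that $N_1$ has maximum support among all $P\leq M$ with $H_P(k_1,b_1)=0$.

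For the inductive step, suppose the claim holds for lists of length $l-1$ and consider the full list $(k_1,b_1),\dots,(k_l,b_l)$. First apply the inductive hypothesis to obtain $N'\leq M$ of maximum support such that $H_{N'}(k_j,b_j)=0$ for $j=1,\dots,l-1$. Next apply Lemma \ref{subhipermat soporte maximo} to $N'$ (itself a $q^t$-orbits hypermatrix) with the pair $(k_l,b_l)$ and $A$ the zero hypermatrix, producing $N\leq N'$ with $H_N(k_l,b_l)=0$, maximal with this property among submatrices of $N'$. Because $N\leq N'$, for every $j<l$ one has $supp(H_N(k_j,b_j))\subseteq supp(H_{N'}(k_j,b_j))=\emptyset$, so all the previously imposed vanishings are automatically preserved.

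It remains to verify the overall maximality. Let $P\leq M$ satisfy $H_P(k_j,b_j)=0$ for every $j=1,\dots,l$. In particular this holds for $j\leq l-1$, so the maximality of $N'$ forces $P\leq N'$. Since also $H_P(k_l,b_l)=0$ fits within $supp(0)$, a second application of Lemma \ref{subhipermat soporte maximo}(2), now carried out inside $N'$, gives $P\leq N$, completing the induction. The only potentially delicate point is the interaction between the two maximality properties at each inductive step; it is dissolved by the ordering $N\leq N'$ together with the transitivity of the relation ``$\leq$'' on $q^t$-orbits hypermatrices recorded in (\ref{matrixordering}), so no genuine obstacle arises.
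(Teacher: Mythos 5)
Your proof is correct and follows exactly the route the paper intends: the authors say the corollary follows ``by applying repeatedly lemma above,'' and your induction on $l$, invoking Lemma~\ref{subhipermat soporte maximo} with $A=0$ at each step and chaining the two maximality guarantees through the transitivity of $\leq$, is the natural formalization of that remark. The only detail worth a parenthetical is the degenerate case $N'=0$ in the inductive step (the lemma is stated for nonzero hypermatrices), but there the conclusion is trivial since $N=0$ then has maximum support vacuously.
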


Our next result shows a sufficient condition to get at once the minimal apparent distance of a hypermatrix.

\begin{proposition}\label{matrizdamvarias}
Let $D$ be a union of $q^t$-orbits and $M=M(D) \neq 0$. Let $H_M(k,b)$ be an involved hypercolumn (see Definition~\ref{def involved hypercolumns}) in the computation of $d^\ast (M)$, with $1\leq k\leq s$ and $b\in \Z_{r_k}$. If $d^\ast(H_M(k,b))=1$ then $mad(M)=d^\ast (M)$.
\end{proposition}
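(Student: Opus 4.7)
The plan is to show both inequalities $mad(M)\le d^\ast(M)$ and $mad(M)\ge d^\ast(M)$. The first is immediate: since $M\le M$ and $M\neq 0$, the definition of $mad$ gives $mad(M)\le d^\ast(M)$ for free. So the real content is the reverse inequality, namely $d^\ast(P)\ge d^\ast(M)$ for every nonzero $q^t$-orbits hypermatrix $P\le M$.

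To this end I would first unpack the hypothesis. Because $(k,b)\in Ip(M)$ and $d^\ast(H_M(k,b))=1$, we have the clean identity $d^\ast(M)=\omega_M(k,b)+1$. The $k$-th direction of $M$ therefore looks, at position $b$, like one nonzero hypercolumn followed (cyclically) by a run of $\omega_M(k,b)$ zero hypercolumns $H_M(k,b_0),\dots,H_M(k,b_{l})$, where $l=\omega_M(k,b)-1$ and $H_M(k,b_l^+)\neq 0$.

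Next I would exploit the condition $P\le M$, which by (\ref{matrixordering}) means $\D(M)\subseteq\D(P)$, so every entry that is zero in $M$ remains zero in $P$. In particular, the hypercolumns $H_P(k,b_0),\dots,H_P(k,b_l)$ are all zero. Now fix the $k$-th direction and pick a $b^\ast\in\Z_{r_k}$ such that $H_P(k,b^\ast)\neq 0$ and the run of zero hypercolumns adjacent to $H_P(k,b^\ast)$ contains the positions $b_0,\dots,b_l$; such a $b^\ast$ exists because $P\neq 0$ forces at least one nonzero hypercolumn in every direction (else $P$ would be identically zero), so we may take $b^\ast$ to be the first nonzero position met when walking backwards cyclically from $b_0$ (this is $b$ itself if $H_P(k,b)\neq 0$). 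By construction $\omega_P(k,b^\ast)\ge \omega_M(k,b)$, hence
\[
(\omega_P(k,b^\ast)+1)\cdot d^\ast(H_P(k,b^\ast))\;\ge\;\omega_M(k,b)+1\;=\;d^\ast(M),
\]
because any nonzero hypermatrix has apparent distance at least $1$. Therefore $d^\ast(P)\ge d_k^\ast(P)\ge d^\ast(M)$, and taking the minimum over nonzero $P\le M$ yields $mad(M)\ge d^\ast(M)$, completing the proof.

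The main obstacle is the case where $H_P(k,b)$ itself has become zero in $P$: in $M$ the gap contributing to $d^\ast(M)$ starts at $b$ with a nonzero column, but in $P$ position $b$ may sit strictly inside a gap. The trick is to observe that this only makes the relevant gap in $P$ longer, and to locate the appropriate anchor $b^\ast$ (the nearest preceding nonzero hypercolumn) so that the factor $\omega_P(k,b^\ast)+1$ still dominates $\omega_M(k,b)+1$; the hypothesis $d^\ast(H_M(k,b))=1$ is precisely what lets us drop the comparison of the inner factors and only bound them below by $1$.
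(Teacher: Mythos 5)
Your proof is correct and follows essentially the same route as the paper's: reduce to $d^\ast(M)=\omega_M(k,b)+1$, observe that zero hypercolumns of $M$ stay zero in any $P\leq M$, and anchor the (possibly enlarged) gap at the nearest preceding nonzero hypercolumn $b^\ast$ of $P$ so that $\omega_P(k,b^\ast)\geq\omega_M(k,b)$. Your treatment is in fact slightly more explicit than the paper's about why the anchor $b^\ast$ exists.
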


\begin{proof}
By hypothesis, we have that $d^\ast (M)=(\omega_M(k,b)+1)\cdot d^\ast (H_M(k,b))=\omega_M(k,b)+1$. Consider a hypermatrix $0\neq M'\leq M$. Clearly, if there is $l\in \Z_{r_k}$ such that $H_M(k,l)=0$  then $H_{M'}(k,l)=0$ ; so that, as  $H_M(k,b)\neq 0$ then $CH_M(k,b) \subseteq CH_{M'}(k,b')$, for some $b'\in \Z_{r_k}$  and hence $d^\ast ( M')\geq (\omega_{M'}(k,b')+1) d^\ast (H_{M'}(k,b')) \geq \omega_{M}(k,b)+1= d^\ast (M)$.
\end{proof}

Note that the proposition above has not interest in the case $s=1$; however, if $M$ is a vector we have that $mad(M)=d^\ast (M)$.  In fact, if $P$ and $M$ are $q^t$-orbits vectors with $P<M$, then $d^\ast (P) < d^\ast (M) $ implies $P=0$. In the multivariate case we have the following result.

\begin{lemma} \label{submatricessubcolumnas}
Let $D$ be a union of $q^t$-orbits such that $M=M(D)\neq 0$. Let $H_M(k,b)$ be an involved hypercolumn in the computation of $d^\ast (M)$, with $1\leq k\leq s$ and $b\in \Z_{r_k}$.  If $P<M$ and $d^\ast (P)<d^\ast (M)$ then $d^\ast (H_P(k,b))<d^\ast (H_M(k,b))$. Consequently, $H_P(k,b)<H_M(k,b)$ as $q^{t|C_{q^t}(b)|}$-orbits hypermatrices.
\end{lemma}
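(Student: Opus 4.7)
The plan is to exploit the identity $d^\ast(M)=(\omega_M(k,b)+1)\cdot d^\ast(H_M(k,b))$ given by the hypothesis that $(k,b)$ is an involved pair, and to derive a matching lower bound for $d^\ast(P)$ coming from the same index $b$ in the $k$-th direction. The preceding proposition already gives $H_P(k,b)\leq H_M(k,b)$ as $q^{t|C_{q^t}(b)|}$-orbits hypermatrices, so the objects to be compared are comparable in the ordering (\ref{matrixordering}), and the whole argument reduces to a short contradiction.

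The first step is to dispose of the trivial subcase $H_P(k,b)=0$. Since $(k,b)$ is involved, $H_M(k,b)\neq 0$, so $d^\ast(H_M(k,b))\geq 1>0=d^\ast(H_P(k,b))$, and the conclusion is immediate. In the remaining subcase $H_P(k,b)\neq 0$, the key observation is that $\omega_P(k,b)\geq \omega_M(k,b)$: because $P\leq M$, every zero hypercolumn of $M$ in direction $k$ is still a zero hypercolumn of $P$, so the consecutive block $CH_M(k,b)$ adjacent to $H_M(k,b)$ (of size $\omega_M(k,b)$) remains a block of zero hypercolumns adjacent to $H_P(k,b)$ in $P$, and it may only be enlarged by further cancellations. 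Using that $d_k^\ast(P)$ is defined as a maximum, one then gets
\[
d^\ast(P)\geq d_k^\ast(P)\geq (\omega_P(k,b)+1)\cdot d^\ast(H_P(k,b))\geq (\omega_M(k,b)+1)\cdot d^\ast(H_P(k,b)).
\]
If one had $d^\ast(H_P(k,b))\geq d^\ast(H_M(k,b))$, this chain would yield $d^\ast(P)\geq (\omega_M(k,b)+1)\cdot d^\ast(H_M(k,b))=d^\ast(M)$, contradicting the hypothesis $d^\ast(P)<d^\ast(M)$. Hence $d^\ast(H_P(k,b))<d^\ast(H_M(k,b))$, and this strict inequality of apparent distances, combined with $H_P(k,b)\leq H_M(k,b)$, rules out equality and gives the asserted strict inequality $H_P(k,b)<H_M(k,b)$ as $q^{t|C_{q^t}(b)|}$-orbits hypermatrices.

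No substantial obstacle is expected; the only delicate bookkeeping is verifying that the adjacent zero block of $H_M(k,b)$ in the sense of Definition~\ref{hipercolumnas cero adyacentes} really transfers to $P$, which is a direct consequence of the definition of the ordering on $q^t$-orbits hypermatrices. Everything else is a one-line contradiction built from the two lower bounds.
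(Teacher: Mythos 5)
Your proposal is correct and follows essentially the same route as the paper's proof: both rest on the inequality $d^\ast(P)\geq(\omega_P(k,b)+1)\,d^\ast(H_P(k,b))$ together with the observation that $CH_M(k,b)\subseteq CH_P(k,b)$ forces $\omega_P(k,b)\geq\omega_M(k,b)$, from which the strict inequality of apparent distances (and hence the strict ordering of hypercolumns) follows. The only cosmetic difference is that you phrase the final step as a contradiction while the paper chains the inequalities directly.
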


\begin{proof}
As $d^\ast (P)<d^\ast (M)$, then
\begin{eqnarray*}
 d^\ast (M)&=&d^\ast (H_M(k,b))(\omega_M(k,b)+1)>\\
&>&d^\ast (P) \geq d^\ast (H_P(k,b))(\omega_P(k,b)+1). 
\end{eqnarray*}

If $H_P(k,b)=0$ then we have finished; so, suppose that $H_P(k,b)\neq 0$. Then we have that $CH_M(k,b)\subseteq CH_P(k,b)$ and so $\omega_M(k,b)\leq \omega_P(k,b)$. This fact, together with the inequality above imply that $d^\ast (H_P(k,b))<d^\ast (H_M(k,b))$, which, in turn, implies that $H_P(k,b)<H_M(k,b)$.
\end{proof}

In the rest of this section, we present our method to compute the minimum apparent distance of a hypermatrix. The proof uses recursion on the dimension of a hypermatrices; so that, for the convenience of the reader we begin by considering the case of dimension $2$; that is, matrices. The case of dimension $1$ is covered by the argument in paragraph below Proposition~\ref{matrizdamvarias} 

\begin{proposition} \label{teodam2}
Let $\Q_t$ be the set of all $q^t$-orbits modulo $(r_1,r_2)$, $\mu\in \{1,\dots, |\Q_t|-1\}$ and $\left\{Q_j\right\}_{j=1}^{\mu} $ a subset of $\Q_t$. Set $D=\cup_{j=1}^\mu Q_j$ and $M=M(D)$. Then there exist two sequences: the first one is formed by nonzero $q^t$-orbits matrices, 
$$M=M_0 > \dots > M_l\neq 0$$
and the second one is formed by positive integers
$$m_0 \geq  \dots \geq m_l $$  
with $ l\leq \mu$ and $m_i\leq d^\ast (M_i)$, for $0 \leq i \leq  l$, verifying the following property:
\begin{itemize}
 \item[( I )] If $P$ is a $q^t$-orbits matrix such that $0 \neq P \leq M$, then $d^\ast (P) \geq m_l$ and if $d^\ast (P) < m_{i-1}$ then $P \leq M_i$, where $0 < i\leq l$ .
\end{itemize}
 Moreover,  if $l' \in \{0, \dots , l\}$ is the first element satisfying  that $m_{l'}=m_l$ then $d^\ast (M_{l'})=mad(M)$.
\end{proposition}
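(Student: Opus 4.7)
The plan is to build the two sequences recursively by repeatedly zeroing an involved hypercolumn at each step, using Corollary~\ref{subhipermat de soporte maximo corolario} to produce the maximal sub-matrix with prescribed vanishing hypercolumns and Lemma~\ref{submatricessubcolumnas} to track how the apparent distance drops.

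\textbf{Construction.} First I would set $M_0 = M$ and $m_0 = d^\ast(M_0)$. Given $M_{i-1}$, I pick an involved pair $(k_{i-1}, b_{i-1}) \in Ip(M_{i-1})$ and apply Corollary~\ref{subhipermat de soporte maximo corolario} to the accumulated list $(k_0, b_0), \ldots, (k_{i-1}, b_{i-1})$ to obtain $M_i$, the $q^t$-orbits matrix of maximum support satisfying $H_{M_i}(k_j, b_j) = 0$ for every $0 \le j \le i-1$. I then set $m_i = \min\{m_{i-1}, d^\ast(M_i)\}$, which automatically gives $m_i \le d^\ast(M_i)$ and $m_i \le m_{i-1}$. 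The procedure halts at the first step $l$ where Proposition~\ref{matrizdamvarias} applies to $M_l$ (some involved hypercolumn has apparent distance $1$) or where the next iteration would force the resulting matrix to be zero. Since each strict inequality $M_i < M_{i-1}$ strips away at least one $q^t$-orbit, a careful orbit count against the $\mu$ orbits encoding $D$ should yield the bound $l \le \mu$.

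\textbf{Verification of Property (I).} The second half of (I) is the heart of the argument and I would prove it by induction on $i$. Let $P$ be a nonzero $q^t$-orbits matrix with $P \le M$ and $d^\ast(P) < m_{i-1}$; the inductive hypothesis (vacuous when $i = 1$) gives $P \le M_{i-1}$. Because $d^\ast(P) < m_{i-1} \le d^\ast(M_{i-1})$ and $(k_{i-1}, b_{i-1})$ is involved in $M_{i-1}$, Lemma~\ref{submatricessubcolumnas} yields $d^\ast(H_P(k_{i-1}, b_{i-1})) < d^\ast(H_{M_{i-1}}(k_{i-1}, b_{i-1}))$. Since hypercolumns of a matrix are vectors and, by the remark after Proposition~\ref{matrizdamvarias}, no nonzero proper sub-vector of a $q^{t'}$-orbits vector can have strictly smaller apparent distance, this forces $H_P(k_{i-1}, b_{i-1}) = 0$. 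Combined with the inductive conclusion that the earlier hypercolumns also vanish in $P$, the maximality of $M_i$ granted by Corollary~\ref{subhipermat de soporte maximo corolario} then yields $P \le M_i$.

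\textbf{Identification of $mad(M)$ and main obstacle.} Let $l'$ be the first index with $m_{l'} = m_l$. If $l' = 0$ then $m_l = d^\ast(M_0)$ by construction; if $l' > 0$ then $m_{l'} < m_{l'-1}$ forces $m_{l'} = d^\ast(M_{l'})$ through the defining formula $m_{l'} = \min\{m_{l'-1}, d^\ast(M_{l'})\}$. Either way $d^\ast(M_{l'}) = m_l$ and so $mad(M) \le m_l$. Conversely, the first half of (I) follows from the halting criterion: if some nonzero $P \le M$ had $d^\ast(P) < m_l$, then (the already established part of) (I) would place $P \le M_l$ with $d^\ast(P) < m_l \le d^\ast(M_l)$, contradicting either Proposition~\ref{matrizdamvarias} applied at $M_l$ (which forces $d^\ast(M_l) = mad(M_l) \le d^\ast(P)$) or the triviality case where no further reduction is possible. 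The main obstacle will be precisely this termination bookkeeping: proving that the two halting conditions exhaust all the possibilities, and verifying that the orbit count genuinely delivers $l \le \mu$ rather than a weaker bound involving $|\mathcal{Q}_t|$.
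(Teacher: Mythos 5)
Your construction is correct but it diverges from the paper's in one genuine way: at each step you choose a \emph{single} involved pair $(k_{i-1},b_{i-1})\in Ip(M_{i-1})$ and append it to an accumulated list, whereas the paper forms $M_{i}$ by simultaneously zeroing \emph{all} hypercolumns $H(k,b)$ with $(k,b)\in Ip(M_{i-1})$. Both variants are sound: your inductive verification of (I) is valid because once $P\le M_{i-1}$ and $d^\ast(P)<m_{i-1}\le d^\ast(M_{i-1})$, Lemma~\ref{submatricessubcolumnas} kills $H_P(k_{i-1},b_{i-1})$, the earlier hypercolumns already vanish from $P\le M_{i-1}$, and the maximality in Corollary~\ref{subhipermat de soporte maximo corolario} forces $P\le M_i$; and because an involved pair requires a nonzero hypercolumn, your accumulated list never repeats, so each step strips at least one orbit. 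What the paper's ``zero everything in $Ip$ at once'' buys is a shorter chain (each step you do is already implied by one of the paper's steps), which is the point of Remark~\ref{complejidad} on complexity. Your one-at-a-time chain still terminates and reaches the same value of $mad(M)$, but it may be longer, so if you care about the advertised cost bound you should adopt the paper's more aggressive step. Your closing worry about obtaining $l\le\mu$ is legitimate: the orbit count you (and the paper) actually use bounds the length by the number of $q^t$-orbits in $\operatorname{supp}(M)$, i.e.\ $|\Q_t|-\mu$, not $\mu$; this is an inconsistency already present in the paper's own statement and proof (which writes ``at most $|\Q_t|-\mu$ steps''), and not a defect of your argument. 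Everything else --- the halting criteria via Proposition~\ref{matrizdamvarias} or $M_{l+1}=0$, the identification $m_{l'}=d^\ast(M_{l'})$, and the two-sided comparison with $mad(M)$ --- matches the paper's reasoning.
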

\begin{proof}
First note that $M\neq 0$ because $\mu\leq |\Q_t|-1$. We shall construct our sequences by recursion. We shall construct two sequences by recursion satisfying condition (\,I\,). Set $M_0=M$, $m_0=d^\ast (M)$ and let $Ip(M)$ be the set of involved pairs in the computation of $d^\ast (M)$  (see Definition~\ref{def involved hypercolumns}). If there is a pair $(k,b)\in Ip(M)$, with $k\in \{1,2\}$ and $b\in \Z_{r_k}$, such that $d^\ast (H_M(k,b))=1$ then by Proposition~\ref{matrizdamvarias} we have finished (with $l=0$); so, suppose that $d^\ast (H_M(k,b))\neq 1$ for any pair $(k,b)\in Ip(M)$. In this case, by Corollary~\ref{subhipermat de soporte maximo corolario}, we may construct the $q^t$-orbits matrix, $M_1<M$ with maximum support, such that $H_M(k,b)=0$ for all $(k,b)\in Ip(M)$.

We claim that for any $q^t$-orbits matrix $P< M$, if $d^\ast (P)<m_0$ then $P\leq M_1$. Assume that $P$ is a $q^t$-orbits matrix with $P< M$ and  $d^\ast (P)<m_0$. Take any $(k,b)\in Ip(M)$. By Lemma~\ref{submatricessubcolumnas}, $d^\ast (H_{P}(k,b))<d^\ast (H_{M}(k,b))=mad(H_M(k,b))$ because $H_M(k,b)$ is a vector and then  $H_{P}(k,b)=0$ (see coment below Proposition~\ref{matrizdamvarias}). Thus $P\leq M_1$ because $M_1$ has maximum support. So, if $M_1=0$ then we have finished by taking again $l=0$.

If $M_1\neq 0$, we finish the base step by defining $m_1=\min\{m_0,d^\ast (M_1)\}$ and so we get $M_1$ and $m_1$ satisfying the required condition by the preceding paragraph.

Suppose we have constructed, for $ \delta\in \{1,\dots, \mu-1\}$ the sequences $M=M_0 > \dots > M_\delta\neq 0$ and $m_0\geq \dots \geq m_\delta$ such that for all $i\in\{1,\dots,\delta\}$, it happens that $m_i = \min\{m_{i-1},d^\ast (M_i)\}$ and if $P\leq M$ satisfies that $d^\ast (P)< m_{i-1}$ then $P < M_i$. 

To get the step $\delta+1$, we shall proceed with $M_\delta$ as we have done for $M_0$. First we check the existence of a pair $(k,b)\in Ip(M_\delta)$, with $k \in \{1,2\}$ and $b\in \Z_{r_k}$, such that $d^\ast (H_{M_\delta}(k,b))=1$. If this happens, then by Proposition~\ref{matrizdamvarias} we have finished (with $l=\delta$); so, suppose that $d^\ast (H_M(k,b))\neq 1$ for any pair $(k,b)\in Ip(M_\delta)$. As above, Corollary~\ref{subhipermat de soporte maximo corolario} allows us to construct the $q^t$-orbits matrix, $M_{\delta+1}<M_\delta$ with maximum support, such that $H_{M_{\delta+1}}(k,b)=0$ for all $(k,b)\in Ip(M_\delta)$. If $M_{\delta+1}= 0$, we finish by setting $l=\delta$; otherwise, Lemma~\ref{submatricessubcolumnas} together with the definition of minimum apparent distance show us that any $q^t$-orbits matrix $0\neq P< M$, satisfying $d^\ast (P)<m_\delta$ verifies that $P\leq M_{\delta+1}$, because $H_{M_{\delta+1}}(k,b)=0$ for all pair $(k,b)\in Ip(M_\delta)$.

We define $m_{\delta+1}=\min\{m_\delta,d^\ast (M_{\delta+1})\}$ which increases our sequences satisfying (I) by the arguments in paragraph above.

This process must stop at most in $|\Q_t|-\mu$ steps, because the supports of the considered $q^t$-orbits matrices differs in at least one $q^t$-orbit.  The sequences end at the step $l$, in which either $d^\ast (H_{M_l}(k,b))= 1$ for some pair $(k,b)\in Ip(M_l)$, with $k\in \{1,2\}$, or $M_{l+1}=0$. We note that, if $l'\in \{1,\dots,l\}$ is the first element such that $m_l=m_{l'}$ then $m_{l'}=d^\ast(M_{l'})=d^\ast(M_l)$. We shall prove that 
$mad(M)=d^\ast (M_{l'})$. Suppose that  there exists a $q^t$-orbits matrix 
$0\neq P \leq M$ with $d^\ast (P)< m_{l'}=m_l$. By the construction of our sequences one must have that $P\leq M_{l+1}\leq M_l$. Now, if the sequence of matrices stops because $d^\ast (H_{M_l}(k,b))= 1$ for some pair $(k,b)\in Ip(M_l)$, then by Proposition \ref{matrizdamvarias} we have that $mad(M_l)=d^\ast(M_l)$ and then $P=0$. Now if the sequence stops because $M_{l+1}=0$ then $P=0$. So in both cases it happens that $P=0$, which is impossible. Hence $mad(M)=d^\ast(M_{l'})$. 
\end{proof}

\begin{remark}\label{complejidad}\rm{
 Let us comment briefly the complexity of our algorithm above in the particular case $s=2$. Given an abelian code $C$ in $A_q(r_1,r_2)$, we denote by $\{e_1,\dots,e_\mu\}$ the set of primitive central idempotents which belong to $C$ and let $\mathcal Q$ be the set of all $q$-orbits modulo $(r_1,r_2)$. The computation of the apparent distance of $C$, with respect to any $\alpha \in U$, by the methods given in \cite{Camion} and \cite{Evans} needs $2^{\mu}-1$ computations of apparent distances of $q$-orbits matrices, while the number of computations in our method is at most $\mu$; that is we change exponential complexity by linear complexity.
}\end{remark}

\begin{example}\label{ejemplo mad 2D}\rm{
 Set $q=2$, $r_1=3$ and $r_2=9$.  Let $M=M(D)$ be the matrix afforded by the set of $2$-orbits $D=Q(1,0)\cup Q(0,1)\cup   Q(1,3)\cup  Q(1,6)$.
 
 Following the construction given in the proof of proposition above we get $m_0=d^*(M)=3$ and $Ip(M_0)=\{(1,0),(2,0),(2,3),(2,6)\}$. We set 
 \[S=\bigcup_{(k,b)\in Ip(M)}supp(H_M(k,b))=\{(0,0),(0,3),(0,6)\}\]
 and construct $M_1$ by puting $0$ in the entries $a_{i,j}$ for which $(i,j)\in Q(0,0)\cup Q(0,3)$ (note that $(0,6) \in Q(0,3)$).
 
 One may see that $M_1\neq 0$; so we repeat the process. Now $d^*M_1=4$, $Ip(M_1)=\{(1,2),(2,2),(2,5),(2,8)\}$ and $m_1=\min\{m_0,d^*M_1\}=\min\{3,4\}=3$. In this case  $$S=\{(1,j) \tq j=2,5,8\}\cup \{(2,j) \tq j=1,2,4,5,7,8\}$$ and $\{Q(a_1,a_2)\tq (a_1,a_2)\in S\}=Q(1,2)\cup Q(2,2)$. This yields $M_2=0$.
 
 Thus, we obtain the sequences  $M>M_1$, $m_0=3\geq m_1=3$ and so $mad(M)=3=d^*M$. 
 
 We note that, following the methods in \cite{Camion} and \cite{Evans} we should compute the apparent distance of $15$ matrices.
}\end{example}

Next theorem is the main result of this section. Here, we give a general method that simplifies the computation of the minimum apparent distance of a hypermatrix.

\begin{theorem}\label{teodamvarias}
Let $s,q$, $r_1,\dots,r_s$ be positive integers, with $q$ a power of a prime number $p$, such that $p\nmid r_i$, for $i=1,\dots s$.  We set $I=\prod_{j=1}^s\Z_{r_j}$. Let $\Q_t$ be the set of all $q^t$-orbits modulo $(r_1, \dots, r_s)$, $\mu\in\{1,\dots, |\Q_t|-1\}$ and $\left\{Q_j\right\}_{j=1}^{\mu} $ a subset of $\Q_t$. Set
$D=\cup_{j=1}^\mu Q_j$ and let $M=M(D)$ be the $q^t$-orbits hypermatrix afforded by $D$. Then there exist two sequences: the first one is formed by pairwise disjoint sets of nonzero $q^t$-orbits hypermatrices 
$$\{M\}=\mathcal{M}_0 , \dots , \mathcal{M}_l \quad (\mathcal{M}_i\neq \emptyset,\;\forall i).$$
and the second one is formed by positive integers 
$$m_0 \geq \dots \geq m_l,$$ 
where $l\leq\mu$, each $L\in \mathcal{M}_i$ verifies that $L\leq M$ and $m_i \leq  \min\{d^\ast (L):L \in \mathcal{M}_i\}$, with $0 \leq i \leq  l$.  Moreover, these sequences verify that:

\begin{enumerate}
 \item If $0<i$ and $L \in \mathcal{M}_i$ then there exists $L' \in \mathcal{M}_{i-1}$ with $L<L'$.
 \item If $0<i$ and $P\leq M$ with $d^\ast(P) < m_{i-1}$ then $P \leq L$ for some $L \in \mathcal{M}_i$.
 \item  If $0\neq P \leq M$ then  $d^\ast (P) \geq m_l$.
\item If $l' \in \{0, \dots , l\}$ is the first element such that $m_{l'} =m_l$, then there are $P\leq M$, and $L \in \mathcal{M}_{l'}$, such that $P\leq L$ and $d^ \ast (P)=mad(M)=m_l$.
\end{enumerate}
\end{theorem}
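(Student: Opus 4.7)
The plan is to proceed by induction on the dimension $s$, using Proposition~\ref{teodam2} as the base case when $s=2$ (taking each $\mathcal{M}_i=\{M_i\}$ to be a singleton). The case $s=1$ is trivial since, as observed below Proposition~\ref{matrizdamvarias}, a nonzero $q^t$-orbits vector $M$ already satisfies $mad(M)=d^\ast(M)$, so $l=0$, $\mathcal{M}_0=\{M\}$ and $m_0=d^\ast(M)$ work. For the inductive step, the essential idea is the same as in the $s=2$ proof, but with the modification that at each step we now need a \emph{set} of hypermatrices rather than a single one, because reducing the apparent distance of an $(s-1)$-dimensional involved hypercolumn can be achieved by several non-comparable sub-hypercolumns.

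The construction would be recursive. Set $\mathcal{M}_0=\{M\}$ and $m_0=d^\ast(M)$. Suppose $\mathcal{M}_\delta$ and $m_\delta$ are built. For each $N\in\mathcal{M}_\delta$ I would consider two possibilities. First, if there exists an involved pair $(k,b)\in Ip(N)$ with $d^\ast(H_N(k,b))=1$, then Proposition~\ref{matrizdamvarias} already gives $mad(N)=d^\ast(N)$, and $N$ contributes no descendant to $\mathcal{M}_{\delta+1}$. Second, if every $(k,b)\in Ip(N)$ has $d^\ast(H_N(k,b))>1$, then by the inductive hypothesis applied to the $(s-1)$-dimensional $q^{t|C_{q^t}(b)|}$-orbits hypermatrix $H_N(k,b)$, there is a family $\mathcal{N}_1^{(k,b)}$ of sub-hypercolumns with the property that every $P'\leq H_N(k,b)$ with $d^\ast(P')<d^\ast(H_N(k,b))$ satisfies $P'\leq A$ for some $A\in\mathcal{N}_1^{(k,b)}$. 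For each choice of $A^{(k,b)}\in\mathcal{N}_1^{(k,b)}$, one pair per involved pair of $N$, Corollary~\ref{subhipermat de soporte maximo corolario} produces a $q^t$-orbits hypermatrix $L<N$ of maximum support satisfying $H_L(k,b)\leq A^{(k,b)}$ for every $(k,b)\in Ip(N)$. I would let $\mathcal{M}_{\delta+1}$ be the collection of all such nonzero $L$ obtained from all $N\in\mathcal{M}_\delta$, and define $m_{\delta+1}=\min\{m_\delta,\min_{L\in\mathcal{M}_{\delta+1}}d^\ast(L)\}$; the process terminates at step $l$ when either $\mathcal{M}_{l+1}=\emptyset$ or some $L\in\mathcal{M}_l$ is terminal as in case (i).

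Verification of the four conclusions follows the template of Proposition~\ref{teodam2}. Property (1) holds by construction. For property (2), if $P\leq M$ has $d^\ast(P)<m_{i-1}$, then by induction on $i$ there is some $N\in\mathcal{M}_{i-1}$ with $P\leq N$; since $d^\ast(P)<m_i\leq d^\ast(N)$, Lemma~\ref{submatricessubcolumnas} gives $d^\ast(H_P(k,b))<d^\ast(H_N(k,b))$ for every $(k,b)\in Ip(N)$, so the inductive hypothesis yields $A^{(k,b)}\in\mathcal{N}_1^{(k,b)}$ with $H_P(k,b)\leq A^{(k,b)}$, and the maximality clause of Corollary~\ref{subhipermat de soporte maximo corolario} then forces $P\leq L$ for the $L\in\mathcal{M}_i$ built from this choice. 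Property (3) follows from the termination argument: the construction strictly decreases supports at each stage, so the total number of steps is at most $\mu$. Finally, for property (4), if $l'$ is the first index with $m_{l'}=m_l$, then any $0\neq P\leq M$ with $d^\ast(P)=m_l$ must, by (2), lie below some $L\in\mathcal{M}_{l'}$; the terminal condition at step $l$ shows $mad(L)=d^\ast(L)=m_{l'}$, giving $mad(M)=m_l$ as claimed.

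The main obstacle I expect is controlling the branching in the inductive step: unlike the $s=2$ case, in which involved hypercolumns are vectors and must be zeroed out, here one must track a combinatorially generated family of sub-hypermatrices, and argue that the maximality properties of Corollary~\ref{subhipermat de soporte maximo corolario} interact correctly with the inductive hypothesis so that no candidate minimizer $P$ escapes capture. A secondary delicate point is correctly handling the terminal hypermatrices (those satisfying case (i)): they must remain in the sequence to ensure that an optimal $P$ with $d^\ast(P)=mad(M)$ is accounted for in property (4), even though they contribute no descendants to $\mathcal{M}_{\delta+1}$.
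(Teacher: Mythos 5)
Your overall strategy (induction on the dimension $s$, with Proposition~\ref{teodam2} and the vector case as base cases, and Lemmas~\ref{subhipermat soporte maximo} and~\ref{submatricessubcolumnas} driving the descent) is the same as the paper's, but the inductive step as you have written it has a genuine gap. You claim that the inductive hypothesis applied to an involved hypercolumn $H_N(k,b)$ yields a family $\mathcal{N}_1^{(k,b)}$ such that \emph{every} $P'\leq H_N(k,b)$ with $d^\ast(P')<d^\ast(H_N(k,b))$ lies below some $A\in\mathcal{N}_1^{(k,b)}$. That is not what property \textit{(2)} of the theorem delivers: it only captures those $P'$ with $d^\ast(P')<m_0^{(k,b)}$, and the statement only guarantees $m_0^{(k,b)}\leq d^\ast(H_N(k,b))$, with strict inequality possible. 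Worse, the first-level family $\mathcal{M}_1$ of the hypercolumn can be empty ($l=0$) while $mad(H_N(k,b))=m_0^{(k,b)}<d^\ast(H_N(k,b))$, in which case a nonzero $P'<H_N(k,b)$ realizing $mad(H_N(k,b))$ is captured by nothing at all. Such a $P'$ lifts to a $P\leq M$ that escapes your $\mathcal{M}_1$, so your verification of property \textit{(2)}, and hence of \textit{(3)} and \textit{(4)}, breaks down. This is precisely the ``branching'' obstacle you flag at the end but do not resolve; it cannot be fixed merely by taking one level of each hypercolumn's sequence per step.

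The paper closes this gap with a two-level construction that your plan omits. For each $P$ it forms a set $\S(P)$ containing one maximal hypermatrix for \emph{every} element of \emph{every} level $\H_0,\dots,\H_{l(k,b)}$ of the sequence of each involved hypercolumn (via Lemma~\ref{subhipermat soporte maximo}), iterates $\S$ to build a finite tree $\T_0(M),\dots,\T_{n(M)}(M)$, and defines $m_i$ as the minimum of $d^\ast$ over the \emph{entire} tree $\T(\mathcal{M}_i)$ — not just over $\mathcal{M}_i$, as you do. The next stage $\mathcal{M}_{i+1}$ is then built only from the leaves $\eta_i=\{N : \S(N)=\emptyset\}$, for which every involved hypercolumn satisfies $d^\ast=mad$; only there does Lemma~\ref{submatricessubcolumnas} force $H_P(k,b)=0$, so that zeroing the involved hypercolumns (Corollary~\ref{subhipermat de soporte maximo corolario}) provably captures every candidate with $d^\ast(P)<m_i$. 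Relatedly, your dichotomy ``terminal if some involved hypercolumn has $d^\ast=1$, otherwise branch on $\mathcal{N}_1^{(k,b)}$'' misses the intermediate case where $\mathcal{N}_1^{(k,b)}=\emptyset$ but $d^\ast(H_N(k,b))>1$: such an $N$ is not terminal in the sense of Proposition~\ref{matrizdamvarias}, yet your rule produces no descendant for it, whereas the correct descendant is the one with that hypercolumn zeroed out. A repair along your lines is conceivable — prove by induction the strengthened statement in which $m_0=d^\ast(M)$ and property \textit{(2)} for $i=1$ is taken relative to $d^\ast(M)$, and adjoin $0$ to $\mathcal{N}_1^{(k,b)}$ when the hypercolumn's first level is empty — but that is a different theorem from the one stated, and you would need to carry the stronger claim through the whole induction; as written, the proposal does not establish the result.
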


\begin{proof}
As $\mu\leq |\Q_t|-1$ then $M\neq 0$. We proceed by induction on $s$. The case $s=1$ follows from the paragraph after Proposition~\ref{matrizdamvarias}  and $s=2$ is Proposition~\ref{teodam2}, taking $\M_i=\{M_i\}$, for $i=0,\dots, l$. Now suppose we know that our theorem holds for $2\leq s-1$. We shall prove it for $s$.

Let us recall that any hypercolumn of $M$, say $H_M(k,b)$, with $1\leq k \leq s$ and $b\in \Z_{r_k}$, may be viewed as a $q^{t|C_{q^t}(b)|}$-orbits hypermatrix of dimension $s-1$ (as we have seen at the beginning of this section), and, as such, by induction hypothesis, we may compute $mad(H_M(k,b))$ and obtain the sequences of this theorem for it.

Now we shall associate to each $q^t$-orbits hypermatrix $0\neq P\leq M$ a set, denoted by $\S(P)$, of $q^t$-orbits hypermatrices less than $P$, with respect to (\ref{matrixordering}). Recall that $Ip(P)$ denotes the set of involved pairs in the computation of $d^\ast (P)$. As we have already mentioned, we know that for each $(k,b)\in Ip(P)$ we may compute $mad (H_P(k,b))$ and construct sequences $\{H_P(k,b)\}=\H_0,\dots,\H_{l(k,b)}$ and $h_{0},\dots,h_{l(k,b)}$ satisfying properties \textit{(1)} to \textit{(4)} of this theorem. Now, for each $(k,b)\in Ip(P)$, set $h(k,b,i)=|\H_i|$, for $i\in \{0,\dots,l(k,b)\}=\N_{l(k,b)}$. We fix an arbitrary indexation on the elements of $\H_i$ with $\N_{h(k,b,i)}=\{0,\dots,h(k,b,i)-1\}$ and set $\gamma(k,b)=\left\{(u,v)\tq u\in \N_{l(k,b)},\;v\in\N_{h(k,b,u)}\right\}$.

By Lemma~\ref{subhipermat soporte maximo}, for each $(k,b)\in Ip(P)$ and $E=(u,v)\in \gamma(k,b)$ we may construct the $q^{t|C_{q^t}(b)|}$-orbits hypermatrix, that we call $(P,E)$, with maximum support, such that $H_{(P,E)}(k,b)$ is exactly the element of $\H_u$ with corresponding index $v$. We collect these hypermatrices in the sets
\begin{eqnarray*}
 R(P,k,b)&=&\left\{(P,E)\tq E\in\gamma(k,b)\right\}\setminus \{P\} \quad \text{and}\\ \S(P)&=&\bigcup_{(k,b)\in Ip(P)}\pega{3} R(P,k,b).
\end{eqnarray*}

Now we set $\M_0=\{M\}$. To construct $\M_1$ we shall collect first, by recursion, a sequence of disjoint sets of hypermatrices less than or equal to $M$, that we will denote $\T_0(M),\dots, \T_{n(M)}(M)$, where $n(M)\in \N$, that satisfies the following properties. For each $i=0,\dots,n(M)$, $\T_i(M)\neq \emptyset$ and for any $P\in \T_i(M)$, one has that $\S(P)\subseteq \T_{i+1}(M)$. In addition, $\T_{n(M)+1}(M)=\emptyset$ . To do this, we set $\T_0(M)=\{M\}$ and $\T_1(M)=\S(M)$. Now, once $\T_i(M)$ has been constructed, we set 
$\T_{i+1}(M)=\bigcup_{P\in \T_i(M)}\S(P)$. So that, if $\T_i(M)\neq \emptyset$ and $P\in \T_i(M)$ then $\S(P)\subseteq \T_{i+1}(M)$.

Note that, for each $j\in \{1,\dots,i+1\}$ and $P\in \T_j(M)$, there must exist $L\in \T_{j-1}(M)$ for which $P<L$ (strictly) and so, the construction of the sequence must stop. 

Let $n(M)\in \N$ be the first element for which  $\T{~\pega{2}_{n(M)+1}}(M)=\emptyset$. Now we set.
\begin{eqnarray*}
\T\left(\M_0\right)&=&\bigcup_{j=0}^{n(M)}\T_j(M)\\
m_0&=& \min\left\{d^\ast (N) \tq N\in \T(\M_0)\right\}\quad \text{and}\\
\eta_0&=& \left\{N \in \T (\M_0) \tq \S(N)=\emptyset\right\}.
\end{eqnarray*}

Let us remark that if $N\in \eta_0$ then for every $(k,b)\in Ip(N)$ we have that $l(k,b)=0$ and then $d^\ast (H_N(k,b))=mad(H_N(k,b))$; moreover, for any $(k,b)\in Ip(N)$, we have that $\{(N,E):E\in \gamma(k,b)\}=\{N\}$ and so $R(N,k,b)=\emptyset$; hence, for any $(k,b)\in Ip(N)$ and any hypermatrix $0\neq L<H_N(k,b)$ we have that $d^\ast (L)\geq d^\ast (H_N(k,b))$  as $(s-1)$-dimensional $q^{t|C_{q^t}(b)|}$-orbit hypermatrices. 

Another property of $\eta_0$ that we need is the following. If $P< M$ is a hypermatrix such that $d^\ast (P)< m_0$ then there exists $N\in \eta_0$ such that $P<N$. Indeed, first, if $\S(M)=\emptyset$ then $P<M\in \eta_0$ and we are done. So, suppose that $\S(M)\neq \emptyset$. By Lemma~\ref{submatricessubcolumnas} $H_P(k,b)<H_M(k,b)$ (strictly), for all $(k,b)\in Ip(M)$ and then  $P\leq L$ for some hypermatrix $L\in \S(M)=\T_1(M)$ and $d^\ast (P)<m_0\leq d^\ast (L)$. Again, if $\S(M)=\emptyset$ we are done; otherwise we may find, $L'\in \S(L)\subseteq\T_2(M)$ for which $P<L'$. We may continue the process until finding $N\in \eta_0$ with $P<N$, as desired.

Now suppose that  $P< M$, with $d^\ast (P)< m_0$ and $N\in \eta_0$ is such that $P<N$. Again $d^\ast (P)<m_0\leq d^\ast (N)$ and then by Lemma~\ref{submatricessubcolumnas} we have that $H_P(k,b)<H_N(k,b)$  for all $(k,b)\in Ip(N)$, which implies that $H_P(k,b)=0$ because of our remark two paragraphs above.

We are now ready to construct $\M_1$. Using Corollary~\ref{subhipermat de soporte maximo corolario}, for each $N\in \eta_0$, we define $L(N) < N$ as the hypermatrix of maximum support for which $H_{L(N)}(k,b)=0$, for all $(k,b)\in Ip(N)$. Then we define
	\[\M_1=\left\{L(N)\tq 0\neq L(N)\;\text{ and }\; N\in \eta_0\right\}.\]
Note that $\M_0\cap \M_1=\emptyset$, because of the construction of the $R(P,k,b)$'s.

If $\M_1 = \emptyset$ then $l=0$. If $\M_1\neq \emptyset$ we get a new element in our sequence; so we have $\M_0,\;\M_1$, and we have to check properties \textit{(1)} and \textit{(2)} of our theorem (properties \textit{(3)} and \textit{(4)} will be checked when we finish the construction of the sequences). Property \textit{(1)} is obvious, as $\M_0=\{M\}$. To check property \textit{(2)} we suppose that there is a hypermatrix $P\leq M$ with $d^\ast (P)<m_0$. As we have already seen, there exists $N\in \eta_0$ such that $P<N$, and for all $(k,b)\in Ip(N)$ we have that $H_P(k,b)=0$, so that $P\leq L(N)$, because $L(N)$ has maximum support.

Suppose we have constructed pairwise disjoint sets $\M_0,\M_1,\dots,\M_i$ with $\M_j\neq \emptyset$ for $j=1,\dots,i$ and a sequence $m_0\geq\dots \geq m_{i-1}$.  verifying properties \textit{(1)} and \textit{(2)} of our theorem and, moreover, if $P\in \M_{j+1}$ and $N\in\M_j$ is such that $P<N$ then $H_P(k,b)=0$ for all $(k,b)\in Ip(N)$ (in a similar way of the hypermatrices in $\M_1$). Suppose we also have constructed $\eta_0,\dots,\eta_{i-1}$ 

The induction step is analogous to the base step. For each $P\in \M_i$, we construct $\T_0(P),\dots,\T_{n(P)}(P)$ and collect

\begin{eqnarray*}
\T\left(\M_i\right)&=& \bigcup_{P\in\M_i} \bigcup_{j=0}^{n(P)} \T_j(P)\\
m_i&=& \min\left(\{d^\ast (P) \tq P\in \T\left(\M_i\right)\}\cup\left\{m_{i-1}\right\}\right), \\
\eta_i&=& \left\{N \in \T\left(\M_i\right) \tq \S(N)=\emptyset\right\}
\end{eqnarray*}
and
\[\M_{i+1}= \left\{L(N)\tq 0\neq L(N)\;\text{ and }\; N\in \eta_i\right\}.\]
Clearly, properties \textit{(1)} and \textit{(2)} of our theorem hold. If $\M_{i+1}=\emptyset$ we have finished our construction of sequences with $l=i$.

Now property \textit{(1)} guarantees that the process must stop; that is, there exists $l\in \N$ such that $\M_{l+1}=\emptyset$; in fact, each sequence could have at most $\mu$ elements.  So, suppose we have constructed $\M_0,\dots,\M_l$, $m_0\geq\dots \geq m_{l}$ such that $\M_l\neq \emptyset$ and $\M_{l+1}=\emptyset$. As $\T(M_l)\neq \emptyset$ then $\eta_l\neq \emptyset$ and then $m_l$ may be computed as above. So the sequences are completed. Now we have to check properties \textit{(1)} to \textit{(4)} of our theorem. As we have seen, properties \textit{(1)} and \textit{(2)} are immediate. To see property \textit{(3)}, we consider a nonzero $q^t$-orbits hypermatrix $P\leq M$ and suppose that $d^\ast (P)< m_l\leq m_0$. Then, $P\leq N$ for some $N\in \eta_l$, and, since $d^\ast (P)<d^\ast (N)$ it must happen that, as above, $H_P(k,b)=0$ for all $(k,b)\in Ip(N)$. However, $\M_{l+1}=\emptyset$, so that $P=0$.

Finally, let $l'\in \{0,\dots,l\}$ be the first element such that $m_{l'}=m_l$. As it has the minimum value then there must exist $P\in \T\left(\M_{l'}\right)$ such that $m_{l'}=d^\ast (P)$ and $P\leq L$ for some $L\in \M_{l'}$. The fact that $d^\ast (P)=mad(M)$ is obvious.
\end{proof}

In \cite[pp. 357-358]{BBS} the reader may find an explicit algorithm for the cases of two and three variables.

\begin{example}\rm{
 We are going to continue with the hypermatrix in Example~\ref{ejemplos dis apar}\textit{(3)}. We recall that $q=2$, $r_1=3$, $r_2=3$, $r_3=5$ and  $M=M(D)$ is the matrix afforded by the set of $2$-orbits 
  \begin{eqnarray*}
D&=& Q(0,0,0)\cup Q(1,0,0)\cup Q(0,1,0)\cup Q(0,0,1)\\
&& \cup Q(1,2,0)\cup Q(1,2,1)\cup Q(1,2,2)\cup Q(1,0,1)\\ 
&&\cup Q(0,1,1)\cup Q(1,0,2)\cup Q(0,1,2).
\end{eqnarray*}
Then $d^\ast (M)=6$ and $Ip(M)=\{(1,2),(2,2),(3,0),(3,1),$ $(3,2), (3,3),(3,4)\}$.

Now we shall compute $\mathcal{S}(M)$. To do this, we have to obtain the sequences for each $H_M(k,b)$ such that $(k,b) \in Ip(M)$ as it is described in the proof of the theorem above or in \cite[p. 358]{BBS}. This gives us $R(M,1,2)=R(M,2,2)=R(M,3,0)=\emptyset$ and  $R(M,3,1)=R(M,3,4)=\{B_1\}$, where $B_1$ is the hypermatrix such that $\D(B_1)=D\cup Q(1,1,2)$, and $R(M,3,2)=R(M,3,3)=\{B_2\}$, where $B_2$ is the hypermatrix such that $\D(B_2)=D\cup Q(1,1,1)$.

Hence $\S(M)=\left\{B_1,\,B_2\right\}$. One may check that, for $B_1$ we get $d^\ast\left(B_1\right)=12$, $Ip\left(B_1\right)=\left\{(1,2),\,(2,2)\right\}$ and $\S\left(B_1\right)=\emptyset$, and for $B_2$ we get $d^\ast\left(B_2\right)=18$, $Ip\left(B_2\right)=\left\{(1,2),\,(2,2)\right\}$ and $\S\left(B_2\right)=\emptyset$. Then $\T_0(M)=\{M\}$, $\T_1(M)=\left\{B_1,\,B_2 \right\}$ and $\T_2(M)=\emptyset$. So that $\T(M)=\left\{M,\, B_1,\, B_2\right\}$, $m_0=\min\{6,12,18\}=6$ and $\eta_0=\left\{B_1,\, B_2\right\}$.

Now we are going to construct $\M_1$. To do this, we have to consider, for each $N\in \eta_0$, the hypermatrix $L(N)<N$ having maximal support with respect to the property $H_{L(N)}(k,b)=0$ for all $(k,b)\in Ip(N)$. In our case, it happens that $L\left(B_1\right)=L\left(B_2\right)=0$; so that $\M_1=\emptyset$ and the process is finished. The sequences are $\{M\}$ and $m_0=6$. Hence $mad(M)=6$.

If we define the code $C$ in $A_2(3,3,5)$ with defining set $D_\alpha(C)=D$, for some $\alpha\in U$, then we have that $d^\ast_\alpha(C)=6$.
}\end{example}

\section{Multivariate BCH bound and BCH code}

This section is devoted to generalize the notion of BCH codes to the multivariate case. We also study the extension of  most of the clasical results about this codes. To do this, we first present a generalization of the notion of BCH bound and BCH code, based on the results that we have seen in the previous sections. Then we shall show that most of the classical results for BCH codes can be generalized to our setting.

We keep all notation from the preceding sections; that is, $s$, $t$, $q$ and $r_1,\dots,r_s$ are positive integers, with $q$ a power of a prime number $p$, such that $p\nmid r_i$, for $i=1,\dots s$  and $A_q(r_1,\dots,r_s)$ is the quotient ring $\F_q[x_1,\dots,x_s]/\left\langle x_1^{r_1}-1,\dots,x_s^{r_s}-1\right\rangle$. Also,  $I=\prod_{j=1}^s\Z_{r_j}$, $\Q_t$ is the set of all $q^t$-orbits modulo $(r_1, \dots, r_s)$ and if  $D$ is a union of $q^t$-orbits then  $M(D)$ denotes the $q$-orbits hypermatrix afforded by $D$. We recall that  $U_{r_i}$ denotes the set of all $r_i$-th primitive roots of unity, for each $i=1,\dots, s$ and we define $U=\{(\alpha_1, \dots ,\alpha_s) \tq \alpha_i \in U_{r_i}\}$.

Our first task is to extend the notion of BCH bound. Next result is the first step in order to establish this extension.

\begin{lemma}\label{designando distancia una cara}
Let $\delta\in\Z$ be such that $\delta\geq 2$ and consider $\alpha \in U$. Let $0 \neq C$ be an abelian code in $A_q(r_1,\dots,r_s)$, $\D_{\alpha}(C)$ its defining set and $M=M\left(\D_{\alpha}(C)\right)$. If there are an element $k\in \{1,\dots,s\}$ and a list of $(\delta-1)$-consecutive integers modulo $r_k$, $\{j_0,\dots,j_{\delta-2}\}$ such that  $H_M(k,j_i)=0$, with $i=0,\dots,\delta-2$, then $d^\ast (C)\geq d_{\alpha}^\ast (C)\geq \delta$.
\end{lemma}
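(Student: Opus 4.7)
The plan is to reduce the statement to a claim about the minimum apparent distance of the hypermatrix $M$ and then run a BCH-style argument on its $k$-th hypercolumn structure. The inequality $d^*(C)\ge d^*_\alpha(C)$ is immediate from the definition $d^*(C)=\max_{\beta\in U}d^*_\beta(C)$, so all the work goes into proving $d^*_\alpha(C)\ge \delta$. By the theorem at the end of Section III, $d^*_\alpha(C)=mad(M)$, so I need to show that every nonzero $q$-orbits hypermatrix $P\leq M$ satisfies $d^*(P)\geq \delta$.

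First I would exploit the ordering \eqref{matrixordering}: since $P\leq M$ means $supp(P)\subseteq supp(M)$, the assumption $H_M(k,j_i)=0$ forces $H_P(k,j_i)=0$ for every $i=0,\dots,\delta-2$. Thus $P$ inherits the same block of $\delta-1$ consecutive zero hypercolumns in the $k$-th direction. Because $P\neq 0$, at least one hypercolumn $H_P(k,b)$ is nonzero, and such $b$ cannot lie in $\{j_0,\dots,j_{\delta-2}\}$. Starting from $j_0-1\pmod{r_k}$ and moving backwards cyclically, let $b^\ast$ be the first index with $H_P(k,b^\ast)\neq 0$ (this exists by the cyclic finiteness of $\Z_{r_k}$ together with $P\neq 0$). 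By construction, all hypercolumns with indices $(b^\ast)^+,(b^\ast)^{++},\dots,j_{\delta-2}$ are zero, so $CH_P(k,b^\ast)$ contains this entire run and hence $\omega_P(k,b^\ast)\geq \delta-1$.

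Now I apply the definition of apparent distance in the $k$-th variable:
\[
d^\ast(P)\;\geq\; d^\ast_k(P)\;\geq\;\bigl(\omega_P(k,b^\ast)+1\bigr)\cdot d^\ast(H_P(k,b^\ast))\;\geq\;\delta\cdot d^\ast(H_P(k,b^\ast)).
\]
Since $H_P(k,b^\ast)\neq 0$, a trivial induction on $s$ (from the two cases of the apparent-distance definition) yields $d^\ast(H_P(k,b^\ast))\geq 1$; hence $d^\ast(P)\geq \delta$. As $P$ was an arbitrary nonzero $q$-orbits hypermatrix below $M$, this gives $mad(M)\geq \delta$ and therefore $d^*_\alpha(C)\geq \delta$.

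The only mildly subtle point is the existence and correct choice of $b^\ast$: one must argue cyclically (rather than linearly) and be careful that the prescribed block of zeros is fully contained in the adjacent-zero set $CH_P(k,b^\ast)$ of some nonzero hypercolumn of $P$. Everything else is a direct unwinding of the definitions of $\omega_P(k,\cdot)$, $d^\ast_k$, and the partial ordering on $q$-orbits hypermatrices.
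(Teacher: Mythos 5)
Your proposal is correct and follows essentially the same route as the paper: reduce to showing $mad(M)\geq\delta$, observe that any nonzero $P\leq M$ inherits the block of $\delta-1$ consecutive zero hypercolumns in direction $k$, and locate the nonzero hypercolumn immediately preceding the (maximally extended) zero run to get $\omega_P(k,b^\ast)\geq\delta-1$ and hence $d^\ast(P)\geq\delta$. The paper phrases this with the specific $P$ attaining $mad(M)$ rather than an arbitrary nonzero $P\leq M$, but the argument is the same.
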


\begin{proof}
We know that $mad(M)=d_{\alpha}^\ast (C)\leq d^\ast(C)\leq d(C)$; so, we have to see that $\delta \leq mad(M)$. Let $P\leq M$ be a hypermatrix such that $d^\ast(P)= mad(M)$. Since $P\leq M$ then $H_P(k,j_i)=0$, for all $i=0,\dots,\delta-2$.  Let $\{j'_0,\dots,j'_{\delta'-2}\}$ be the biggest list of consecutive integers modulo $r_k$ containing $\{j_0,\dots,j_{\delta-2}\}$ which $H_P(k,j'_i)=0$, for all $i=0,\dots,\delta'-2$. Since $C\neq 0$ then $H_P(k,j^{\prime\;-}_0)\neq 0$ and so $\omega(k,j^{\prime\;-}_0)=\delta'\geq \delta$. Hence $\delta \leq \delta' \leq d^\ast(P)$.
\end{proof}

Now we deal with the general case.
\begin{theorem}\textbf{(Multivariate BCH bound)}\label{teo distancia designada}
Let $s,q$, $r_1,\dots,r_s$ be positive integers, with $q$ a power of a prime number $p$, such that $p\nmid r_i$, for $i=1,\dots s$ and $\alpha \in U$.  We set $I=\prod_{j=1}^s\Z_{r_j}$. Let $C$ be a nonzero abelian code in $A_q(r_1,\dots,r_s)$ with defining set $\D_{\alpha}(C)$ and $M$ the $q$-orbits hypermatrix afforded by $\D_{\alpha}(C)$. Suppose that there exist a subset $\gamma\subseteq \{1,\dots,s\}$ and a list of integers $\left\{\delta_k\geq 2 \tq k\in\gamma\right\}$ satisfying the following property:  for each $k\in\gamma$, the hypermatrix $M$ has zero hypercolumns $H_M(k,j_{(k,0)}), \dots , H_M(k,j_{(k,\delta_k-2)})$, where $\{j_{(k,0)}, \dots , j_{(k,\delta_k-2)}\}$ is a list of consecutive integers modulo $r_k$. Then $d_{\alpha}^\ast (C)\geq \prod_{k\in\gamma}\delta_k$. Hence, $d^\ast (C)\geq \prod_{k\in\gamma}\delta_k$.
\end{theorem}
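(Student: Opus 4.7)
The plan is to reduce the theorem to the following hypermatrix-level claim and then apply it to every $q$-orbits sub-hypermatrix of $M$.

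\textbf{Key claim.} \emph{If $N$ is any nonzero hypermatrix indexed by $I=\prod_{j=1}^s\Z_{r_j}$, and, for each $k\in\gamma$, $N$ has $\delta_k-1$ consecutive zero hypercolumns in direction $k$, then $d^\ast(N)\geq\prod_{k\in\gamma}\delta_k$.}

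Once this claim is established, the theorem follows immediately. Indeed, $d^\ast_{\alpha}(C)=mad(M)$, so it is enough to show $d^\ast(P)\geq\prod_{k\in\gamma}\delta_k$ for every nonzero $q$-orbits hypermatrix $P\leq M$. The inclusion $supp(P)\subseteq supp(M)$ guarantees that every zero hypercolumn of $M$ is still a zero hypercolumn of $P$, so $P$ satisfies the hypothesis of the claim; taking the minimum over such $P$ gives $mad(M)\geq\prod_{k\in\gamma}\delta_k$.

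I would prove the claim by induction on $|\gamma|$. For the base case $|\gamma|=1$ the argument of Lemma~\ref{designando distancia una cara} applies verbatim: extend $\{j_{(k,0)},\dots,j_{(k,\delta_k-2)}\}$ to the longest list of consecutive zero hypercolumns of $N$ in direction $k$ containing it; since $N\neq 0$, the hypercolumn immediately preceding this list, say $H_N(k,b_0)$, is nonzero, and $\omega_N(k,b_0)\geq \delta_k-1$. By definition of $d^\ast_k(N)$,
\[ d^\ast(N)\geq d^\ast_k(N)\geq(\omega_N(k,b_0)+1)\cdot d^\ast(H_N(k,b_0))\geq\delta_k, \]
using $d^\ast(H_N(k,b_0))\geq 1$. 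For the induction step, pick $k_0\in\gamma$ and locate, as above, a nonzero hypercolumn $H_N(k_0,b_0)$ adjacent to the zero run in direction $k_0$ with $\omega_N(k_0,b_0)\geq\delta_{k_0}-1$. The key observation is that $H_N(k_0,b_0)$, viewed as an $(s-1)$-dimensional hypermatrix, inherits the zero hypercolumns in every remaining direction $k\in\gamma\setminus\{k_0\}$: if $H_N(k,b)=0$, then every entry $a_\b{i}$ with $\b{i}(k)=b$ vanishes, in particular those with the additional constraint $\b{i}(k_0)=b_0$, so $H_{H_N(k_0,b_0)}(k,b)=0$. Applying the induction hypothesis to $H_N(k_0,b_0)$ with $\gamma'=\gamma\setminus\{k_0\}$ yields $d^\ast(H_N(k_0,b_0))\geq\prod_{k\in\gamma'}\delta_k$, and consequently
\[ d^\ast(N)\geq(\omega_N(k_0,b_0)+1)\,d^\ast(H_N(k_0,b_0))\geq\delta_{k_0}\prod_{k\in\gamma\setminus\{k_0\}}\delta_k=\prod_{k\in\gamma}\delta_k. \]

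The main obstacle is not a deep computation but the two inheritance bookkeeping steps: first, that the zero-hypercolumn configuration in $M$ passes to every sub-hypermatrix $P\leq M$ (which is a straightforward support argument), and second, that inside the inductive step it also passes to the lower-dimensional object $H_N(k_0,b_0)$ in all remaining directions. Once these two points are verified cleanly, the recursive definition of the apparent distance of a hypermatrix forces the multiplicative bound and the theorem follows.
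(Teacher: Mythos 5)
Your proof is correct and follows essentially the same route as the paper's: both reduce the statement to bounding $d^\ast(P)$ for every nonzero $P\leq M$ (which inherits the zero hypercolumns), and both induct on $|\gamma|$ by passing to the nonzero hypercolumn adjacent to a zero run and invoking the multiplicative form of the apparent distance. The only difference is one of packaging: you run the induction on a purely combinatorial statement about arbitrary hypermatrices, while the paper runs it on the code-level statement and must therefore identify the hypercolumn $N_k$ of the minimizing $P$ with the defining-set hypermatrix of a lower-dimensional abelian code $C_{N_k}$; your version sidesteps that detour, but the underlying argument is the same.
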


\begin{proof}
We proceed by induction on $|\gamma|=l$. The case $l=1$ is covered by the lemma above. So, suppose that, our theorem is true for $l-1$ and let $C$ be a nonzero abelian code in $A_q(r_1,\dots,r_s)$ with defining set $\D_\alpha(C)$ and $M$ the $q$-orbits hypermatrix afforded by $\D_\alpha(C)$.  Let $P\leq M$ be a hypermatrix such that $d^\ast(P)= mad(M)$. Since $P\leq M$ then $H_P(k,j_{(k,i)})=0$ for $k\in \gamma$ and $i=0,\dots,\delta_k-2$. For each $k \in \gamma$, we set $N_k=H_P(k,b_k)$ the nonzero hypercolumn of $P$ such that $\left\{H_P(k,j_{(k,i)})\right\}_{i=0}^{\delta_k-2}\subseteq CH_P(k,b_k)$ (see Definition~\ref{hipercolumnas cero adyacentes}). By the definition of apparent distance of a hypermatrix we have that $d^\ast (N_k)\cdot \delta_k\leq d^\ast (P)$ for any $k\in\gamma$.  Now, we know that $H_{N_k}(u,j_{(u,i)})=0$, with $u\in \gamma\setminus \{ k\}$ and $i=0,\dots,\delta_u-2$.  Let $C_{N_k}$ be the abelian code in $A_q(r_1,\dots,r_s)$  such that $\D(N_k)=\D_\alpha(C_{N
 _k})$ where we are considering $N_k$ as a hypermatrix indexed by $\prod_{j\neq k}\Z_{r_j}$. Then $M(\D_{\alpha}(C_{N_k}))=N_k$.  By induction hypothesis $d^\ast (N_k) \geq mad(N_k)=d^\ast \left(C_{N_k}\right) \geq  \displaystyle\prod
 _{u\in \gamma\setminus \{ k\}} \delta_u$. Hence $ d^\ast(P)\geq \prod_{k\in\gamma}\delta_k$ and we are done.
\end{proof}

Let us reformulate last theorem in terms of lists of positive integers. We recall that for any element $b\in \Z$ and any positive integer $r$, we denote by $\overline{b}$ the canonical representative of $b$ in $ \Z_{r}$.

\begin{corollary}
Let $\gamma\subseteq \{1,\dots,s\}$ be a set, and let $\delta=\{\delta_k\geq 2 \tq k\in \gamma\}$ and $b=\{b_k\geq 0\tq k\in \gamma\}$ be lists of integers. For each $k \in \gamma$ consider the list of consecutive integers modulo $r_k$,  $J_k= \{\overline{b_k},\dots,\overline{b_k+\delta_k-2}\}$ and set $A_k=\{\b{i}\in I \tq \b{i}(k)\in J_k\}$. If $C$ is a nonzero abelian code satisfying $\cup_{k=1}^sA_k \subseteq \D_{\alpha}(C)$, for some $\alpha \in U$, then $d^\ast (C)\geq \prod_{k\in\gamma}\delta_k$.
\end{corollary}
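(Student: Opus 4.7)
The plan is to show that this corollary is a direct reformulation of the multivariate BCH bound Theorem \ref{teo distancia designada}, obtained by translating the hypothesis about containment of the sets $A_k$ in the defining set into the hypothesis about zero hypercolumns of the afforded hypermatrix.

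First, I would unwind the definitions. Let $M=M(\D_\alpha(C))$ be the $q$-orbits hypermatrix afforded by the defining set of $C$. Recall from Section~\ref{preliminares} that $\D(M)=\D_\alpha(C)$, so an entry $a_\b{i}$ of $M$ is $0$ precisely when $\b{i}\in\D_\alpha(C)$. For each fixed $k\in\gamma$ and $b\in\Z_{r_k}$, the hypercolumn $H_M(k,b)$ consists of the entries indexed by $I(k,b)=\{\b{i}\in I \tq \b{i}(k)=b\}$; hence $H_M(k,b)=0$ if and only if $I(k,b)\subseteq \D_\alpha(C)$.

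Next, I would observe that $A_k = \bigcup_{b\in J_k} I(k,b)$ by the very definition of $A_k$. Therefore the hypothesis $\bigcup_{k\in\gamma}A_k\subseteq \D_\alpha(C)$ (interpreting the index set as $\gamma$, since $A_k$ is only defined for $k\in\gamma$) translates exactly into: for each $k\in\gamma$ and each $b\in J_k$, we have $I(k,b)\subseteq \D_\alpha(C)$, and thus $H_M(k,b)=0$. Writing $j_{(k,i)}=\overline{b_k+i}$ for $i=0,\dots,\delta_k-2$, the list $\{j_{(k,0)},\dots,j_{(k,\delta_k-2)}\}=J_k$ is a list of $\delta_k-1$ consecutive integers modulo $r_k$ along which the hypercolumns of $M$ vanish.

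Finally, these are exactly the hypotheses of Theorem~\ref{teo distancia designada} applied to the subset $\gamma$ and the list $\{\delta_k\tq k\in\gamma\}$. That theorem yields $d^\ast(C)\geq d_\alpha^\ast(C)\geq \prod_{k\in\gamma}\delta_k$, which is the desired conclusion. Since the work is purely a bookkeeping translation between the ``defining set'' formulation and the ``zero hypercolumns'' formulation, there is no real obstacle; the only subtlety worth flagging is the mild abuse of notation in writing $\bigcup_{k=1}^s A_k$ when $A_k$ has been defined only for $k\in\gamma$, which is harmless under the natural convention $A_k=\emptyset$ for $k\notin\gamma$.
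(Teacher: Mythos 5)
Your proposal is correct and follows essentially the same route as the paper, which simply observes that the corollary is immediate from Theorem~\ref{teo distancia designada} by taking $j_{(k,0)}=\overline{b_k},\dots,j_{(k,\delta_k-2)}=\overline{b_k+\delta_k-2}$; you have just unwound the definitional equivalence between $A_k\subseteq\D_\alpha(C)$ and the vanishing of the corresponding hypercolumns more explicitly. Your remark about the harmless abuse of notation in $\bigcup_{k=1}^s A_k$ is a fair observation but does not affect the argument.
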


\begin{proof}
Immediate from the theorem above by taking $j_{(k,0)}=\overline{b_k}$ until $j_{(k,\delta_k-2)}=\overline{b_k+\delta_k-2}$. 
\end{proof}

\begin{example}\label{cotas bch}\rm{
 Let $C_1$ and $C_2$ be the abelian codes in $A_2(5,7)$, with defining sets $D_1=D_{\alpha}(C_1)=Q(0,1)\cup Q(1,1)$ and $D_2=D_{\alpha}(C_2)=D_1\cup Q(0,0)\cup Q(0,3)$, with respect to some $\alpha \in U$. We set $M_1= M(D_1)$ and $M_2=M(D_2)$.
 
 We are going to apply Lemma~\ref{designando distancia una cara} to $C_1$. A simple inspection shows us that $H_{M_1}(2,1)=H_{M_1}(2,2)=H_{M_1}(2,4)=0$ so that by taking $k=2$, $j_0=1$ and $j_1=2$ we get $\delta=3$. Therefore $d^\ast (C_1)\geq 3$.
 
 Now we apply Theorem~\ref{teo distancia designada} to $C_2$. Again, a simple inspection shows us that $H_{M_2}(1,0)=H_{M_2}(2,1)=H_{M_2}(2,2)=H_{M_2}(2,4)=0$. In this case, we take $\gamma=\{1,2\}$, $j_{1,0}=0$, $j_{2,0}=1$ and $j_{2,1}=2$. So that $\delta_1=2$, $\delta_2=3$ and hence $d^\ast (C_2) \geq 6$. 
}\end{example}

We are ready to present a new notion of multivariate BCH code. We recall that $I(k,l)=\{\mathbf i \in I: \mathbf{i}(k)=l\}$

\begin{definition}\textbf{(Multivariate BCH code)}\label{codigo bch multivariable}
Let $s,q$, $r_1,\dots,r_s$, $I$ be as above. Let $\gamma \subseteq \{1,\dots,s\}$ and $\delta=\{r_k\geq \delta_k\geq 2\tq k\in\gamma\}$. An abelian code $C$ in $A_q(r_1,\dots,r_s)$ is a multivariate BCH code of designed distance $\delta$ if there exists a list of positive integers $b=\{b_k\tq k\in\gamma\}$ such that
\[\D_{\alpha}(C)=\bigcup_{k\in\gamma} \bigcup_{l=0}^{\delta_k-2}\bigcup_{\b{i}\in I(k,\overline{b_k+l})}Q(\b{i})\]
for some $\alpha \in U$, where $\{\overline{b_k},\dots,\overline{b_k+\delta_k-2}\}$ is a list of consecutive integers modulo $r_k$. We denote $C=B_q(\alpha,\gamma,\delta,b)$, as usual.
\end{definition}

As a direct consequence of Theorem \ref{teo distancia designada}  we have that $d^\ast \left(B_q(\alpha,\gamma,\delta,b)\right) \geq \prod_{k\in\gamma}\delta_k$. 

\begin{example}\rm{
In Example~\ref{cotas bch}, the code $C_1$ is a multivariate BCH code, $C_1=B_2\left(\alpha,\{2\},\{3\},\{1\}\right)$, while $C_2=B_2\left(\alpha,\{1,2\},\{2,3\},\{0,1\}\right)$.

Let us show an example of a multivariate BCH code in $A_2(3,5,5)$. Let $C_3$ be the abelian code with defining set $D_3=\D_\alpha(C_3)=Q(0,0,0)\cup Q(0,0,1)\cup Q(0,1,0)\cup Q(1,0,0)\cup Q(1,0,1)\cup Q(1,0,2)\cup Q(1,1,0)\cup Q(1,2,0)$, with respect to some $\alpha\in U$. Set $M_3=M(D_3)$. In this case $H_{M_3}(2,0)=H_{M_3}(3,0)=0$ so we may take $\gamma=\{2,3\}$, $\delta=\{2,2\}$ and $b=\{0,0\}$ to conclude that $C_3=B_2\left(\alpha,\{1,2\},\{2,2\},\{0,0\}\right)$.
}\end{example}

From now on, we shall extend the basic properties of BCH codes to the multivariate case. The following property is immediate.

\begin{corollary}
 Let $B_q(\alpha,\gamma,\delta,b)$ be a multivariate BCH code. For each $k\in \gamma$, set $J_k=\left\{\overline{b_k},\dots,\right.$ $\left.\overline{b_k+\delta_k-2}\right\}$ and $A_k=\{\b{i}\in I \tq \b{i}(k)\in J_k\}$.  If $C$ is an abelian code in $A_q(r_1,\dots,r_s)$ such that $\cup_{k=1}^sA_k \subseteq \D_{\alpha}(C)$ then $\dim C\leq \dim B_q(\alpha,\gamma,\delta,b)$.
\end{corollary}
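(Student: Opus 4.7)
The plan is to show that the hypothesis $\bigcup_{k\in\gamma}A_k\subseteq \D_\alpha(C)$ forces $\D_\alpha\bigl(B_q(\alpha,\gamma,\delta,b)\bigr)\subseteq \D_\alpha(C)$, after which the dimension inequality falls out from the standard fact that, in the semisimple case, the dimension of an abelian code equals $|I|$ minus the cardinality of its defining set.

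First I would rewrite the defining set of the BCH code in a more convenient form. By Definition~\ref{codigo bch multivariable},
\[
\D_\alpha\bigl(B_q(\alpha,\gamma,\delta,b)\bigr)=\bigcup_{k\in\gamma}\bigcup_{l=0}^{\delta_k-2}\bigcup_{\b{i}\in I(k,\overline{b_k+l})}Q(\b{i}),
\]
and noting that $A_k=\bigcup_{l=0}^{\delta_k-2}I(k,\overline{b_k+l})$, this becomes
\[
\D_\alpha\bigl(B_q(\alpha,\gamma,\delta,b)\bigr)=\bigcup_{k\in\gamma}\bigcup_{\b{i}\in A_k}Q(\b{i}).
\]
Now I would invoke the fact (recorded in Section~\ref{preliminares}) that the defining set of any abelian code in the semisimple case is a disjoint union of $q$-orbits modulo $(r_1,\dots,r_s)$. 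Therefore, from $A_k\subseteq \D_\alpha(C)$ for every $k\in\gamma$, closure under $q$-orbits gives $Q(\b{i})\subseteq \D_\alpha(C)$ for every $\b{i}\in A_k$ and every $k\in\gamma$. Taking unions yields
\[
\D_\alpha\bigl(B_q(\alpha,\gamma,\delta,b)\bigr)\subseteq \D_\alpha(C).
\]

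To conclude, I would use the standard dimension formula $\dim E=|I|-|\D_\alpha(E)|$, valid for any abelian code $E$ in $A_q(r_1,\dots,r_s)$ under the semisimple assumption $\gcd(r_k,q)=1$. The inclusion just established gives $|\D_\alpha(B_q(\alpha,\gamma,\delta,b))|\leq |\D_\alpha(C)|$, hence
\[
\dim C=|I|-|\D_\alpha(C)|\leq |I|-|\D_\alpha(B_q(\alpha,\gamma,\delta,b))|=\dim B_q(\alpha,\gamma,\delta,b).
\]
There is no real obstacle here: the whole argument is a bookkeeping exercise on defining sets, and the only subtlety worth spelling out is the rewriting step that identifies the $A_k$ used in the hypothesis with the union of the hypercolumn index-sets appearing in Definition~\ref{codigo bch multivariable}.
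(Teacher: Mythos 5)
Your argument is correct and is precisely the reasoning the paper has in mind when it declares this corollary ``immediate'': the hypothesis forces $\D_\alpha(B_q(\alpha,\gamma,\delta,b))\subseteq\D_\alpha(C)$ because defining sets are unions of $q$-orbits, and the dimension formula $\dim E=|I|-|\D_\alpha(E)|$ (used explicitly by the paper in the proof of Theorem~\ref{dimBCH mayor igual orbitas por hipermatrices}) then reverses the inclusion into the dimension inequality. No gap; your write-up simply makes explicit what the paper leaves unproved.
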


It is known (see \cite[Theorem 10, p. 203]{sloane}) that any (univariate) BCH code $B=B_q(\alpha,\delta,b)$ in $A_q(r_1,\dots,r_s)$ verifies that $d(B)\geq \delta$ and $\dim(B)\geq r-m(\delta-1)$, where $m=\mathcal O_r(q)$. In the multivariate case we have the following result.

\begin{theorem}\label{dimBCH mayor igual orbitas por hipermatrices}
 Let $s,q$, $r_1,\dots,r_s$ be positive integers, with $q$ a power of a prime number $p$, such that $p\nmid r_i$, for $i=1,\dots s$.  We set $I=\prod_{j=1}^s\Z_{r_j}$.  Let $B_q(\alpha,\gamma,\delta,b)$ be a multivariate BCH code with $\delta=\{\delta_k \geq 2\tq k\in \gamma\}$ and $b=\{b_k\geq 0\tq k\in \gamma\}$. 
 Then $\dim_{\F_q}B_q(\alpha,\gamma,\delta,b)\geq \prod_{j =1}^s r_j-m\left(\sum_{k\in\gamma}\left( \left(\delta_k-1 \right) \prod_{\doble{j=1}{j\neq k}}^s r_j \right)   \right),$
where $m= \lcm\left\{\mathcal{O}_{r_k}(q)\right\}_{k=1}^s$.
\end{theorem}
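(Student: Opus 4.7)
The plan is to reduce the dimension question to an upper bound on the size of the defining set. Since $A_q(r_1,\dots,r_s)$ is semisimple, any abelian code $C$ satisfies $\dim_{\F_q} C = |I| - |\D_\alpha(C)|$ with $|I| = \prod_{j=1}^s r_j$. So it suffices to show
\[
|\D_\alpha(B_q(\alpha,\gamma,\delta,b))| \;\leq\; m\sum_{k\in\gamma}(\delta_k-1)\prod_{\doble{j=1}{j\neq k}}^s r_j.
\]

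First I would apply the subadditivity of cardinality to the defining set, which is written in Definition~\ref{codigo bch multivariable} as a union indexed by $k\in\gamma$:
\[
|\D_\alpha(C)| \;\leq\; \sum_{k\in\gamma}\Bigl|\bigcup_{l=0}^{\delta_k-2}\bigcup_{\b{i}\in I(k,\overline{b_k+l})} Q(\b{i})\Bigr|.
\]
For a fixed $k\in\gamma$, the slices $I(k,\overline{b_k+l})$ for $l=0,\dots,\delta_k-2$ are pairwise disjoint (they correspond to distinct values of the $k$-th coordinate), and each has exactly $\prod_{j\neq k}r_j$ elements, so the set $B_k=\bigcup_{l=0}^{\delta_k-2} I(k,\overline{b_k+l})$ has cardinality $(\delta_k-1)\prod_{j\neq k}r_j$.

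The remaining step is to bound $\bigl|\bigcup_{\b{i}\in B_k} Q(\b{i})\bigr|$. The key observation is that the size of any $q$-orbit $Q(\b{i})$ in $I$ is the smallest positive integer $n$ such that $r_u\mid i_u(q^n-1)$ for every $u=1,\dots,s$, and this $n$ necessarily divides $m=\lcm\{\mathcal{O}_{r_u}(q)\}_{u=1}^s$. Hence every orbit has at most $m$ elements. Since $B_k$ meets at most $|B_k|$ distinct orbits (each element of $B_k$ lying in a unique orbit), we get
\[
\Bigl|\bigcup_{\b{i}\in B_k} Q(\b{i})\Bigr| \;\leq\; m\cdot|B_k| \;=\; m(\delta_k-1)\prod_{\doble{j=1}{j\neq k}}^s r_j.
\]
Summing this over $k\in\gamma$ gives the required upper bound on $|\D_\alpha(C)|$, and substituting into the dimension formula concludes the proof.

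The argument is essentially a counting one; the only place that requires care is the orbit-size bound $|Q(\b{i})|\leq m$, but this is standard for multivariate $q$-orbits and follows immediately from the definition of $m$ as the lcm of the multiplicative orders. I do not expect any serious obstacle; the bound is a natural (though possibly loose) generalization of the univariate estimate $\dim B_q(\alpha,\delta,b)\geq r-m(\delta-1)$.
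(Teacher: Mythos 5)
Your proof is correct and follows essentially the same approach as the paper: write $\dim C = \prod_j r_j - |\D_\alpha(C)|$, bound each slice $I(k,\overline{b_k+l})$ by $\prod_{j\neq k} r_j$, and bound each $q$-orbit's size by $m=\lcm\{\mathcal{O}_{r_k}(q)\}$. The only difference is that you spell out the subadditivity and the orbit-count step a bit more explicitly than the paper does, which is a matter of exposition rather than substance.
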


\begin{proof}
Set $C=B_q(\gamma,\delta,b)$. By definition we have that \[D_\alpha(C)=\bigcup_{k\in\gamma} \bigcup_{l=0}^{\delta_k-2}\bigcup_{\mathbf{i} \in I(k,\overline{b_k+l})} Q(\mathbf{i}).\]
Clearly $|I(k,h)|=\prod_{\doble{j=1}{j\neq k}}^s r_j$ for all $h\in \Z_{r_k}$, and for any $\b{i}\in I$ we have that  $\left|Q(\b{i})\right| \leq \lcm\left\{\mathcal{O}_{r_k}(q)\right\}_{k=1}^s=m$, so that
$$\begin{array}{lll}
\dim(C)&=& \prod_{j =1}^s r_j-|\D_\alpha(C)|\\
&\geq&  \prod_{j =1}^s r_j-m\left(\sum_{k\in \gamma}\left( \left(\delta_k-1 \right) \prod_{\doble{j=1}{j\neq k}}^s r_j \right)   \right).
\end{array}
$$

\end{proof}

\section{Applications}

\subsection{Multiplying dimension in abelian codes}

We shall construct abelian codes starting from BCH (univariate) codes with designed distance $\delta\in \N$. We keep all notation from the preceding sections.

\begin{lemma}\label{lema dimension multiplicada}
 Let $D$ be a union of $q$-orbits modulo $(r_1,r_2)$ and consider the $q$-orbits matrix $M=M(D)$. The following conditions on $M$ are equivalent:
\begin{enumerate}
 \item \label{condicion 1 lema dimension multi} Each column $H_M(2,j)$ verifies that either $H_M(2,j)=0$ or all of its entries have constant value $1$.
\item \label{condicion 2 lema dimension multi} For all $(i,j)\in\Z_{r_1}\times\Z_{r_2}$, it happens that $(i,j) \in D$ if and only if $(x,j) \in D$ for all $x \in \Z_{r_1}$. 
\end{enumerate}
\end{lemma}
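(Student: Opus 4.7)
The plan is to prove the equivalence by directly unpacking both conditions using the definition of the hypermatrix $M(D)$ recalled in Section~II: the entry $a_{(i,j)}$ of $M$ equals $0$ precisely when $(i,j)\in D$ and equals $1$ otherwise. Hence saying that $H_M(2,j)$ is the zero column is the same as saying $(i,j)\in D$ for every $i\in\Z_{r_1}$, while saying that every entry of $H_M(2,j)$ equals $1$ is the same as saying $(i,j)\notin D$ for every $i\in\Z_{r_1}$. This translation is what the whole proof hinges on.

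For the direction (\ref{condicion 2 lema dimension multi})$\Rightarrow$(\ref{condicion 1 lema dimension multi}), I would fix $j\in\Z_{r_2}$ and split on whether some element of column $j$ lies in $D$. If there exists $i$ with $(i,j)\in D$, then condition~(\ref{condicion 2 lema dimension multi}) forces $(x,j)\in D$ for every $x\in\Z_{r_1}$, so every entry of $H_M(2,j)$ equals $0$. Otherwise $(x,j)\notin D$ for every $x$, so every entry of $H_M(2,j)$ equals $1$. In both cases column $j$ satisfies (\ref{condicion 1 lema dimension multi}).

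For the reverse implication (\ref{condicion 1 lema dimension multi})$\Rightarrow$(\ref{condicion 2 lema dimension multi}), I would fix an arbitrary pair $(i,j)$ and verify the biconditional. The direction ``$(x,j)\in D$ for all $x\in\Z_{r_1}$'' implies ``$(i,j)\in D$'' is trivial by taking $x=i$. For the opposite direction, if $(i,j)\in D$ then $a_{(i,j)}=0$, so the column $H_M(2,j)$ contains a zero entry; by (\ref{condicion 1 lema dimension multi}) it cannot be the column of ones, so it must be the zero column, and therefore $(x,j)\in D$ for every $x\in\Z_{r_1}$.

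Since each step is a straightforward translation between the set $D$ and the entries of its associated hypermatrix, I do not expect any genuine obstacle. The only point deserving care is to respect the convention ``$a_{\b{i}}=1\Leftrightarrow \b{i}\notin D$'' (rather than the reverse), so as not to swap the two cases in~(\ref{condicion 1 lema dimension multi}) and produce a vacuous argument.
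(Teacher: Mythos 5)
Your proof is correct and takes essentially the same approach as the paper, which simply observes that the equivalence follows immediately from the defining convention $a_{(i,j)}=0\Leftrightarrow(i,j)\in D$; you merely spell out the case analysis that the paper leaves implicit.
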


\begin{proof}
The result comes immediately from the definition of (hyper)matrix afforded by $D$; that is, for any $a_{ij}\in M$, $a_{ij}=0$ if and only if $(i,j)\in D$ and, otherwise, $a_{ij}=1$.
\end{proof}

As the reader may see, an analogous result may be obtained by replacing $r_2$ by $r_1$. For our next theorem we recall that, associated to the computation of the apparent distance of a hypermatrix, we defined the set  of optimized roots of $C$ as $\mathcal R(C)=\{\beta \in U \tq d^\ast (C)=d_{\beta}^\ast (C))\}$. We also keep the notation in Remak~\ref{distintas raices}.

\begin{theorem}\label{teorema dimension multiplicada}
 Let $n$ and $r$ be positive integers such that $\gcd(q,nr)=1$.  Let $C$ be a nonzero cyclic code in $A_q(r)$ with $d^\ast (C)= \delta >1$ and $\alpha=(\alpha_1,\alpha_2) \in U_n \times \mathcal{R}(C)$. Then, the abelian code $C_n$ in $A_q(n,r)$ with defining set $\D_{\alpha}(C_n)=\Z_n\times \D_{\alpha_2}(C)$ verifies that $d^\ast (C_n) = \delta$ and $\dim_{\F_q}(C_n)=n\dim_{\F_q}(C)$.
\end{theorem}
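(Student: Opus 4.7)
The plan is to verify the dimension by direct counting, and then to reduce the apparent-distance calculation to a one-dimensional computation via the explicit product structure of the $q$-orbits matrix associated to $\D_\alpha(C_n)$. First I would check that $\Z_n \times \D_{\alpha_2}(C)$ is indeed a union of $q$-orbits modulo $(n,r)$: if $(i,j)$ lies in it then $jq \in \D_{\alpha_2}(C)$, since the latter is a union of $q$-orbits mod $r$, and $iq \in \Z_n$ trivially. Thus $\D_\alpha(C_n)$ is a legitimate defining set, and from $|\D_\alpha(C_n)| = n|\D_{\alpha_2}(C)|$ one immediately gets $\dim_{\F_q}(C_n) = nr - n|\D_{\alpha_2}(C)| = n\dim_{\F_q}(C)$.

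For $d^\ast_\alpha(C_n)$, let $M$ be the $q$-orbits matrix afforded by $\D_\alpha(C_n)$. Lemma~\ref{lema dimension multiplicada} tells us that each column $H_M(2,j)$ is either the zero vector (when $j \in \D_{\alpha_2}(C)$) or a constant all-ones vector (otherwise). Consequently every row of $M$ coincides with the same vector $v$, namely the one afforded by $\D_{\alpha_2}(C)$, which is nonzero because the hypothesis $\delta > 1$ forces $C$ to be a proper code. A direct reading of the definitions then gives $d_1^\ast(M) = d^\ast(v)$ (no row is zero, so $\omega_M(1,i) = 0$ for all $i$) and $d_2^\ast(M) = d^\ast(v)$ (each nonzero column has apparent distance $1$, and the counts $\omega_M(2,b)$ exactly replicate the zero-runs of $v$). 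Since $\alpha_2 \in \mathcal R(C)$ and one-dimensional minimum apparent distance equals apparent distance, $d^\ast(v) = d^\ast_{\alpha_2}(C) = d^\ast(C) = \delta$, whence $d^\ast(M) = \delta$. To bridge from $d^\ast(M)$ to $mad(M)$ I would apply Proposition~\ref{matrizdamvarias} to an involved pair $(2,b)$ coming from the $d_2^\ast$ computation: it satisfies $d^\ast(H_M(2,b)) = 1$, so $mad(M) = d^\ast(M) = \delta$ and hence $d^\ast_\alpha(C_n) = \delta$.

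To upgrade $d^\ast_\alpha(C_n) = \delta$ to $d^\ast(C_n) = \delta$, I would exploit that the product structure of the defining set is preserved under change of root. For any $\beta = (\beta_1,\beta_2) \in U$, writing $\beta_1 = \alpha_1^{a_1}$ and $\beta_2 = \alpha_2^{a_2}$ with $\gcd(a_1,n) = \gcd(a_2,r) = 1$, a direct check gives $\D_\beta(C_n) = \Z_n \times \D_{\beta_2}(C)$. Re-running the previous matrix argument on this new defining set yields $d^\ast_\beta(C_n) = d^\ast_{\beta_2}(C) \leq d^\ast(C) = \delta$, and combining with the equality at $\alpha$ gives $d^\ast(C_n) = \delta$. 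I expect the principal obstacle to be in the middle paragraph, where one must cleanly read off that the two-dimensional apparent distance of $M$ collapses to the one-dimensional apparent distance of $v$ in both the row and the column directions, and then invoke Proposition~\ref{matrizdamvarias} to ensure this value is actually attained by $mad(M)$ rather than merely bounding it from above.
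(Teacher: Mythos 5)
Your proof is correct and takes essentially the same route as the paper: Lemma~\ref{lema dimension multiplicada} reduces $d^\ast$ of the product matrix to the one-dimensional apparent distance of the vector afforded by $\D_{\beta_2}(C)$, and Proposition~\ref{matrizdamvarias} is invoked at a column-direction involved pair (whose hypercolumn has apparent distance $1$) to identify $mad$ with $d^\ast$. You make explicit the change-of-root step $\D_\beta(C_n) = \Z_n \times \D_{\beta_2}(C)$ for all $\beta \in U_n\times U_r$, which the paper's proof uses implicitly.
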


\begin{proof}
Consider $\beta = (\beta_1, \beta_2) \in U_n\times U_r$ and let $C_n$ the abelian code such that $\D_\beta(C_n) = \Z_n\times \D_{\beta_2}(C)$. It is clear that $\D_\beta(C_n)$ satisfies the condition~\textit{(\ref{condicion 2 lema dimension multi})} of Lemma \ref{lema dimension multiplicada}; so, the $q$-orbits matrix afforded by $\D_\beta(C_n)$, $N = M(\D_\beta(C_n))$, verifies the condition~\textit{(\ref{condicion 1 lema dimension multi})} of that lemma. If  $M = (a_j),j\in \Z_r,$ is the $q$-orbits vector afforded by $\D_{\beta_2}(C)$ then $H_N(2, j) = 0$ if and only if $a_j = 0$. Hence, and since $d^\ast(C) = \delta$, for all nonzero row of $N$, $H_N(1, b)$, we have that $\omega_N(1, b) = 0$ and $d^ \ast (H_N(1, b)) = d^\ast (M)\leq \delta$. So $d^\ast_1(N) = \max\{ (\omega_N(1, b)+1)d^\ast (H_N(1, b))\} = d^\ast (M)\leq \delta$. Moreover, the equality is reached when $\beta_2\in \mathcal R(C)$.
 
Now, for all nonzero column of $N$, $H_N(2, b)$, we have that $\omega_N(2, b)< d^ \ast (M)\leq \delta$ and $d^\ast (H_N(2, b)) = 1$, hence $d^\ast_2(N) = \max\{ (\omega_N(2, b)+1)d^\ast (H_M(2, b))\} \leq d^\ast (M)\leq \delta$. Therefore, $d^\ast (N) = d^\ast (M) \leq \delta$ and from Proposition~\ref{matrizdamvarias}, $d^\ast_\beta (C_n) = mad(N) = d^\ast (N)\leq \delta$. The equality is reached if $\beta_2\in \mathcal R(C)$.
 
Finally, since $dim_{\F_q} (C_n) = |supp(N)|$, we have that $dim_{\F_q} (C_n) = n dim_{\F_q} (C)$.
\end{proof}

\begin{example}\rm{
 Set $q=2$, $r=55$, $n=3$, $\alpha=(\alpha_1,\alpha_2) \in U_{3}\times U_{55}$ and let  $C$ be the cyclic code in $A_2(55)$ with defining set with respect to $\alpha_2$, $D=\D_{\alpha_2}(C)=C_2(1)\cup C_2(5)$. Set $M=M(D)$. A simple inspection on $M$ shows us that $C$ is a BCH code with parameters $C=B_2(\alpha_2,7,13)$ and dimension 25. By applying Theorem~\ref{teorema dimension multiplicada} we construct the new bivariate code $C_1$ with defining set $\D(C_1)=\Z_3 \times D$. So that $d^\ast (C_3)=7$, $\dim_{\F_2}(C_3)=75$ and its length is $165$. In fact $C_1= B_2(\alpha,\{2\},\{7\},\{13\})$, by Lemma~\ref{designando distancia una cara}.
}\end{example}

In order to multiply dimension in multivariate Reed Solomon codes we have the following result.

\begin{proposition}\label{multiplicar dimension una cara}
Let  $B_q(\alpha,\gamma,\delta,b)$ be a multivariate BCH code with $\gamma=\{k\}$, $\delta=\{\delta_k\}$ and $b=\{b_k\}$, for some $k\in \{1,\dots,s\}$. If $r_k=q-1$ then $d_{\alpha}^\ast\left(B_q(\alpha,\gamma,\delta,b)\right)=\delta_k$ and $\dim_{\F_q}(B_q(\alpha,\gamma,\delta,b))=\left(r_k-\delta_k+1\right)\prod_{\doble{j=1}{j\neq k}}^s r_j$.
\end{proposition}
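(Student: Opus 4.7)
The plan is to exploit the crucial numerical coincidence $r_k = q-1$, which forces $q \equiv 1 \pmod{r_k}$ and thereby makes every $q$-cyclotomic coset modulo $r_k$ a singleton. As a consequence, for any orbit $Q(\b{i})$ and any $j \in \N$ one has $\b{i}(k)\cdot q^j \equiv \b{i}(k) \pmod{r_k}$, so the entire orbit sits inside the slice $I(k, \b{i}(k))$. Therefore $\bigcup_{\b{i} \in I(k, j)} Q(\b{i}) = I(k, j)$ for each $j \in \Z_{r_k}$, and Definition~\ref{codigo bch multivariable} collapses to
\[\D_\alpha(C) = \bigcup_{l=0}^{\delta_k - 2} I(k, \overline{b_k + l}).\]
Since this is a disjoint union of $\delta_k - 1$ sets of size $\prod_{j \neq k} r_j$, we get $|\D_\alpha(C)| = (\delta_k - 1) \prod_{j \neq k} r_j$ and hence the claimed dimension $(r_k - \delta_k + 1) \prod_{j \neq k} r_j$.

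For the apparent distance, the inequality $d_\alpha^\ast(C) \geq \delta_k$ is the multivariate BCH bound of Theorem~\ref{teo distancia designada}, so the real work is the matching upper bound. I would examine the hypermatrix $M = M(\D_\alpha(C))$ directly, which by the above description is extremely structured in the $k$-th direction: $H_M(k, j) = 0$ for $j \in \{\overline{b_k}, \dots, \overline{b_k + \delta_k - 2}\}$, while $H_M(k, j)$ is the all-ones $(s-1)$-dimensional hypermatrix for every other slice. A short induction on $s$ shows that the all-ones hypermatrix has apparent distance equal to $1$, so $d^\ast(H_M(k, j)) = 1$ on every nonzero column. Taking $b = \overline{b_k - 1}$ (the slice immediately preceding the zero block) yields $\omega_M(k, b) = \delta_k - 1$, hence $d_k^\ast(M) = \delta_k$ and $(k, \overline{b_k-1}) \in Ip(M)$.

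For $k' \neq k$, each column $H_M(k', b)$ is itself a $q$-orbits hypermatrix of exactly the same type (zero on the same $\delta_k - 1$ slices of the $k$-th coordinate, one elsewhere), so an $s$-induction gives $d^\ast(H_M(k', b)) = \delta_k$. Moreover, since $\delta_k \leq r_k$ there is always some $\b{i} \in I(k', b)$ with $\b{i}(k) \notin \{\overline{b_k}, \dots, \overline{b_k + \delta_k - 2}\}$, so no column in the $k'$-th direction is entirely zero and $\omega_M(k', b) = 0$. Thus $d_{k'}^\ast(M) = \delta_k$, giving $d^\ast(M) = \delta_k$. Since $(k, \overline{b_k-1})$ is involved and $d^\ast(H_M(k, \overline{b_k-1})) = 1$, Proposition~\ref{matrizdamvarias} delivers $mad(M) = d^\ast(M) = \delta_k$. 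Because $d_\alpha^\ast(C) = mad(M)$, this completes the argument.

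The main obstacle I anticipate is the bookkeeping in the inductive computation of $d^\ast(H_M(k', b))$ for $k' \neq k$: one must verify that each such hypercolumn, after the identification of $H_M(k', b)$ with an $(s-1)$-dimensional $q^{t|C_{q^t}(b)|}$-orbits hypermatrix described at the start of Section IV, really does inherit the ``zero block of length $\delta_k - 1$ in the $k$-coordinate, all ones elsewhere'' structure, so that the induction hypothesis applies uniformly. Once this is done, everything else is a direct application of Proposition~\ref{matrizdamvarias}.
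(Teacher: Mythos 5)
Your proposal is correct and follows essentially the same route as the paper's own proof: reduce the defining set to the union of slices $I(k,\overline{b_k+l})$ using $q\equiv 1\pmod{r_k}$, read off the dimension by counting, compute $d^\ast_k(M)=\delta_k$ from the block of $\delta_k-1$ zero hypercolumns, show $d^\ast_{k'}(M)=\delta_k$ for $k'\neq k$ by the same structural induction on $s$, and finish with Proposition~\ref{matrizdamvarias}. The ``bookkeeping obstacle'' you flag at the end is exactly the step the paper carries out explicitly (writing each $H_M(j,l)$ for $j\neq k$ as the $(s-1)$-dimensional hypermatrix afforded by $\bigcup_l\{\b{i}\in\prod_{m\neq j}\Z_{r_m}:\b{i}(k)=l\}$); your invocation of Theorem~\ref{teo distancia designada} for the lower bound is a harmless redundancy, since the equality $mad(M)=d^\ast(M)=\delta_k$ already yields it.
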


\begin{proof}
Since $r_k=q-1$ we have that $l q \equiv l \mod r_k$, hence $Q(\b{i}) \subseteq I(k,l)$  for all $\b{i} \in I(k,l)$.  So, $$\D_\alpha(C)=\bigcup_{l=\overline{b_k}}^{\overline{b_k+\delta_k-2}} I(k,l).$$ 
By following the proof of Theorem~\ref{dimBCH mayor igual orbitas por hipermatrices} in the case $\gamma=\{k\}$, for some $k\in \{1,\dots,s\}$ we have that $\dim_{\F_q}(C)=\prod_{j=1}^s r_j-|\D_\alpha(C)|=\left(r_k-\delta_k+1\right)\prod_{\doble{j=1}{j\neq k}}^s r_j$. 

Note that if $M=M(\D_\alpha(C))$ and $l \in \{\overline{b_k}, \dots, \overline{b_k+\delta_k-2}\}$ then $H_M(k,l)=0$, otherwise, all of the entries of $H_M(k,l)$ have constant value 1. Therefore, $\omega_M(k,\overline{b_k}-1)=\delta_k-1$ and $d^\ast (H_M(k,\overline{b_k}-1))=1$, then $d_k^\ast (M)=\delta_k$.\\ 
We claim that $d^\ast (M)=d_k^\ast (M)$ and we shall prove it by induction on $s$. The case $s=1$ is trivial.  Assume that it is true for $s-1$ ($s\geq 2$) and consider $j \in \{1, \dots, s\}\setminus k$. Then, for all $l\in \Z_{r_j}$, the hypercolumn $H_M(j,l)$ may be viewed as an $(s-1)$-dimensional $q^{|C_q(l)|}$-orbits  hypermatrix indexed by $J=\prod_{\doble{i=1}{i\neq j}}^s \Z_{r_i}$, say $H_M(j,l)=\left(a_{\b{i}} \right)_{\b{i}\in J}$, where
\begin{equation*}
 a_{\b{i}}=\begin{cases}
   0 & \text{if}\; \b{i}(k) \in \{\overline{b_k}, \dots, \overline{b_k+\delta_k-2}\} \\
	1 & \text{otherwise.}  
  \end{cases}
\end{equation*}
Note that for all $l \in \Z_{r_j}$ it happens that $H_M(j,l) \neq 0$, which implies that $\omega_M(j,l)=0$.  In addition, $H_M(j,l)$ may be viewed as an $(s-1)$-dimensional hypermatrix afforded by $$\bigcup_{l=\overline{b_k}}^{\overline{b_k+\delta_k-2}}\, \left \{\b{i} \in \prod_{\doble{m=1}{m \neq j}}^s \Z_{r_m}\tq \b{i}(k)=l\right\}.$$ By induction hypothesis we have that  $d^\ast (H_M(j,l))=d_k^\ast (H_M(j,l))= \delta_k$. So, $d_j^\ast (M)=\delta_k$ for all $j \in \{1, \dots, s\}\setminus k$ and therefore, $d^\ast (M)=\max \{d_i^\ast (M) \tq i=1,\dots s\}=d_k^\ast (M)=\delta_k$, as desired.  Now, from Proposition~\ref{matrizdamvarias}, $mad(M)=d^\ast (M)=\delta_k$. Then,  $d_\alpha^\ast (B_q(\gamma,\delta,b))=\delta_k$ .
\end{proof}

The proposition above is applicable to the codes that we obtain by using the construction given in Theorem~\ref{teorema dimension multiplicada}, when we start from Reed-Solomon codes.

\begin{corollary}
Let $R=B_q(\alpha,\delta,b)$ be a Reed-Solomon code of length $r$. Then, for each positive integer $n$ and any $\alpha'\in U_n$, there exists a multivariate BCH code, $C=B_q\left(\{\alpha',\alpha\},\{2\},\{\delta\},\{b\}\right)$, such that $\dim(C)=\left(r-\delta+1\right)n= n \cdot \dim(R)$ and $d_\alpha^\ast(C)=\delta$.
\end{corollary}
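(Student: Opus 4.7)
The plan is to recognize that the proposed code $C=B_q\!\left(\{\alpha',\alpha\},\{2\},\{\delta\},\{b\}\right)$, which lives in the bivariate algebra $A_q(n,r)$, falls squarely into the hypothesis of Proposition~\ref{multiplicar dimension una cara}. The crucial input that has to be made explicit is the defining property of Reed--Solomon codes: a Reed--Solomon code is a BCH code whose length equals $q-1$, so $r=q-1$. Therefore, taking $s=2$, $\gamma=\{k\}$ with $k=2$, $r_1=n$, $r_2=r=q-1$, $\delta=\{\delta\}$ and $b=\{b\}$ in that proposition is legitimate (the condition $\gcd(q,n)=1$ needed for the semisimple setting is built into the problem since $A_q(n,r)$ is considered).

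Applying Proposition~\ref{multiplicar dimension una cara} directly to $C$ yields at once
\[d^{\,*}_{(\alpha',\alpha)}(C)=\delta \qquad\text{and}\qquad \dim_{\F_q}(C)=(r-\delta+1)\,n.\]
Since $R$ is a Reed--Solomon code of length $r=q-1$ with designed distance $\delta$, its dimension is $r-\delta+1$ (this is the classical formula, and also an instance of Proposition~\ref{multiplicar dimension una cara} in the trivial case $s=1$). Consequently $\dim_{\F_q}(C)=n\cdot\dim_{\F_q}(R)$, completing the statement.

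No real obstacle arises once $r=q-1$ has been made explicit; the corollary is essentially a specialization of Proposition~\ref{multiplicar dimension una cara}. If one prefers to derive it from the earlier cartesian construction, an alternative route is to form, as in Theorem~\ref{teorema dimension multiplicada}, the bivariate code with defining set $\Z_n\times \D_\alpha(R)$ and then verify that this set coincides with
\[\bigcup_{l=0}^{\delta-2}\bigcup_{\b{i}\in I(2,\overline{b+l})}Q(\b{i}),\]
the equality being immediate because the condition $r=q-1$ forces every element of $\Z_r$ to form its own $q$-cyclotomic coset, so each slice $I(2,\overline{b+l})$ is already a union of $q$-orbits in $\Z_n\times\Z_r$. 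This identifies $C$ with the stated multivariate BCH code, and the desired apparent distance and dimension then come from Theorem~\ref{teorema dimension multiplicada}.
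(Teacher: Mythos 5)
Your proposal is correct and matches the paper's intended argument: the paper offers no explicit proof beyond remarking that Proposition~\ref{multiplicar dimension una cara} applies to the construction of Theorem~\ref{teorema dimension multiplicada} starting from a Reed--Solomon code, which is precisely your main route (with the key point $r=q-1$ made explicit). Your alternative derivation via the cartesian defining set $\Z_n\times\D_\alpha(R)$ is also sound, since $r=q-1$ makes each slice $I(2,\overline{b+l})$ a union of $q$-orbits.
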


\subsection{Designing maximum dimensional abelian codes (HD codes) for prescribed bounds}

In this section we will give another application of our techniques. Given an ambient space (and then a fixed length), we will design abelian codes with the highest dimension with respect to a fixed value for their apparent distance (HD codes, for short). As the reader will see our ideas are based in the consideration of the distribution of the $q$-orbits on the indexes of hypermatrices. We have used the GAP program to compute the minimum distance of some codes.  We begin with an example of codes of length 35.

\begin{example}\rm{
  We shall design HD codes in $\F_2^{35}$. We begin by constructing HD cyclic codes, that is, BCH codes. Observe the distribution of $2$-orbits ($2$-cyclotomic cosets in this case) in a $1\times 35$ vector:
\vspace{-.4cm}

\begin{footnotesize}\begin{eqnarray*}
 [ Q(0),Q(1),Q(1),Q(3),Q(1),Q(5),Q(3),Q(7),Q(1),Q(1), Q(5),\\
  Q(1), Q(3),Q(3),Q(7),Q(15),Q(1),Q(3),Q(1),Q(3),Q(5),Q(7),\\
 Q(1),Q(1), Q(3),Q(15),Q(3),Q(3),Q(7),Q(1),Q(15),\\
 Q(3),Q(1),Q(3),Q(3)] 
\end{eqnarray*}\end{footnotesize}
and $K(35)=\{1,3\}$ (see Remark~\ref{distintas raices}).\\

Then, one may see that the vectors afforded by $D_1=Q(1)\cup Q(5)$ or $D_2=Q(3)\cup Q(15)$ verify that $d^\ast (M(D_1))=d^\ast (M(D_2))=5$. Consider $\alpha \in U_{35}$ and let $C_1$ the cyclic code such that $\D_\alpha(C_1)=D_1$.  Observe that $D_2=3 \cdot D_1$ (see Remark~\ref{distintas raices}).  One may check that $d^\ast (C_1)=5$. Clearly, $C_1$ is a BCH code of dimension $20$ (so that, it is a HD code) and one may check that its minimum distance is $d(C_1)=6$. 

Now we want to fix a higher apparent distance. To do this, we extend the defining set of $C_1$ to $D_3=D_1\cup Q(7)$ and note that $3\cdot D_3=D_2\cup Q(7)$. Set $D_4=3\cdot D_3$. Then, one may check that the cyclic code $C_3$ with $\D_\alpha(C_3)=D_3$ has apparent distance $6$ and dimension $16$. One may also check that $d(C_3)=7$. In fact, this code have the highest dimension with respect to this bound; that is, it is a BCH code. Finally, a simple inspection shows us that any BCH code of length 35 with designed distance greater than $6$ must contain the cyclotomic cosets $Q(1)\cup Q(3)$ and hence its dimension must be less than $11$.

Now we want to construct bivariate HD codes in $A_2(5,7)$ (so that all of them will have the same length). Now, the distribution of $2$-orbits in a $5\times 7$ matrix is as follows

\begin{footnotesize}
\[  \left(\begin{array}{l@{ }l@{ }l@{ }l@{ }l@{ }l@{ }l}
    Q(0,0)& Q(0,1)&Q(0,1)&Q(0,3)&Q(0,1)&Q(0,3)&Q(0,3)\\
     Q(1,0)& Q(1,1)&Q(1,1)&Q(1,3)&Q(1,1)&Q(1,3)&Q(1,3)\\
     Q(1,0)& Q(1,1)&Q(1,1)&Q(1,3)&Q(1,1)&Q(1,3)&Q(1,3)\\
     Q(1,0)& Q(1,1)&Q(1,1)&Q(1,3)&Q(1,1)&Q(1,3)&Q(1,3)\\
     Q(0,0)& Q(1,1)&Q(1,1)&Q(1,3)&Q(1,1)&Q(1,3)&Q(1,3)
  \end{array}\right)\]
\end{footnotesize}
and $K(5,7)=\{(1,1),(1,3)\}$.

Setting $D_5=Q(0,0)\cup Q(1,0)\cup Q(0,3)$ we have that $mad (M(D_5))=4$. Consider $\alpha \in U_5 \times U_7$ and the abelian code such that $\D_\alpha(C_5)=D_5$.  Note that if $D=(1,3)\cdot D_5$ then $mad(M(D))=4$. It is easy to check that $C_5$  is a code of dimension $27$ and $d^\ast (C_5)=4$. By using the GAP program we obtain that $d(C_5)=4$. It is interesting to note that there are no cyclic codes with this parameters (see the remark below). However, $C_5$ is not a HD code with apparent distance $4$. In fact, if we consider the code $C'_5$ such that $\D_\alpha(C'_5)=D_5\setminus Q(0,0)$ we have that $C'_5$ is a HD code with apparent distance $d^\ast (C'_5)=4$ and its dimension is $28$. The reader may check that the distribution of the $q$-orbits in the matrix above shows us that any abelian code in $A_2(5,7)$ with dimension greater than 28 has apparent distance less than 4.  

Now we want to fix a higher apparent distance. To do this, we consider now $D_6=Q(0,1) \cup Q(0,3) \cup Q(1,3)$. The matrix afforded by $D_6$ has $mad (M(D_6))=6$ and the abelian code, $C_6$, such that $\D_\alpha(C_6)=D_6$, is a HD code with apparent distance $6$, dimension $17$ and minimum distance $6$. Finally, set $D_7=Q(0,0) \cup Q(1,0) \cup Q(0,1) \cup Q(0,3) \cup Q(1,3)$. The matrix afforded by $D_7$ has $mad (M(D_7))=8$ and the abelian code, $C_7$, such that $\D_\alpha(C_7)=D_7$ is a HD code with $d^\ast (C_7)=8$, dimension $13$ and minimum distance $8$.
}\end{example}

\begin{remark}\rm{
  In the example above, one may check that for any binary cyclic code $C$ of length $35$, it happens that if $\dim C\geq 25$ then its apparent distance verifies that $d^\ast(C)\leq 3$, because $\D_\alpha(C)$ cannot include neither of $Q(1)$ nor $Q(3)$. Then, even the abelian code $C_5$ may be seen as a cyclic code (via the Chinese Remainder Theorem) its apparent distance, computed as a cyclic code, will never be more than $3$.
}\end{remark}

\section{Conclusion}
We developed  an algorithm to computing the minimum apparent distance of a hypermatrix which noticeably reduces the number of involved operations in the computation of the apparent distance of an abelian code, with respect to the methods proposed in \cite{Camion} and \cite{Evans}.  In the two dimensional case the number of computations is of \textit{linear} order instead of exponential order.  Our techniques allowed us to give a notion of BCH multivariate bound and code, respectively. Moreover, we construct abelian codes from cyclic codes preserving their apparent distance and multiplying their dimension, and, given an ambient space, we design abelian codes with the highest dimension with respect to a fixed value for their apparent distance.


%
%

\end{document}